\newtheorem{thm}{Theorem}%
\newtheorem{lem}[thm]{Lemma}%
\theoremstyle{remark}
\theoremstyle{plain}
\numberwithin{equation}{section}
\def\QQ{{\mathbb Q}}
\def\RR{{\mathbb R}}
\def\ZZ{{\mathbb Z}}
\def\vecb{{\text{\boldmath$b$}}}
\def\vece{{\text{\boldmath$e$}}}
\def\vecm{{\text{\boldmath$m$}}}
\def\vecq{{\text{\boldmath$q$}}}
\def\vecQ{{\text{\boldmath$Q$}}}
\def\vecp{{\text{\boldmath$p$}}}
\def\vecs{{\text{\boldmath$s$}}}
\def\vecv{{\text{\boldmath$v$}}}
\def\vecV{{\text{\boldmath$V$}}}
\def\vecw{{\text{\boldmath$w$}}}
\def\vecx{{\text{\boldmath$x$}}}
\def\vecy{{\text{\boldmath$y$}}}
\def\vecz{{\text{\boldmath$z$}}}
\def\vecalf{{\text{\boldmath$\alpha$}}}
\def\vecbeta{{\text{\boldmath$\beta$}}}
\def\vecomega{{\text{\boldmath$\omega$}}}
\def\vecxi{{\text{\boldmath$\xi$}}}
\def\vecnull{{\text{\boldmath$0$}}}
\def\scrA{{\mathcal A}}
\def\scrB{{\mathcal B}}
\def\scrD{{\mathcal D}}
\def\scrK{{\mathcal K}}
\def\scrL{{\mathcal L}}
\def\scrP{{\mathcal P}}
\def\scrS{{\mathcal S}}
\def\scrW{{\mathcal W}}
\def\fU{{\mathfrak U}}
\def\e{\mathrm{e}}
\def\intl{{\operatorname{int}}}
\def\GL{\operatorname{GL}}
\def\S{\operatorname{S{}}}
\def\SL{\operatorname{SL}}
\def\ASL{\operatorname{ASL}}
\def\SO{\operatorname{SO}}
\def\T{\operatorname{T{}}}
\def\vol{\operatorname{vol}}
\def\ASLASL{\ASL(d,\ZZ)\backslash\ASL(d,\RR)}
\def\ASLZ{\ASL(d,\ZZ)}
\def\ASLR{\ASL(d,\RR)}
\def\SLZ{\SL(d,\ZZ)}
\def\SLR{\SL(d,\RR)}
\def\trans{\,^\mathrm{t}\!}
\def\bs{\backslash}
\def\nbar{\overline{n}}
\def\Onder#1#2#3#4#5{#1 \setbox0=\hbox{$#1$}\setbox1=\hbox{$#2$}
       \dimen0=.5\wd0 \dimen1=\dimen0 \dimen2=\dp0 \dimen3=\dimen2
       \advance\dimen0 by .5\wd1 \advance\dimen0 by -#4
       \advance\dimen1 by -.5\wd1 \advance\dimen1 by -#4
       \advance\dimen2 by -#3 \advance\dimen2 by \ht1
       \advance\dimen2 by 0.3ex \advance\dimen3 by #5
        \kern-\dimen0\raisebox{-\dimen2}[0ex][\dimen3]{\box1}
       \kern\dimen1}
\newcommand{\asl}{\mathfrak{asl}}
\newcommand{\lsl}{\mathfrak{sl}}
\newcommand{\ig}{\mathfrak{g}}
\newcommand{\il}{\mathfrak{l}}
\newcommand{\ik}{\mathfrak{k}}
\newcommand{\Q}{\mathbb{Q}}
\newcommand{\R}{\mathbb{R}}
\newcommand{\Z}{\mathbb{Z}}
\newcommand{\HS}{{{\S'_1}^{d-1}}}
\newcommand{\col}{\: : \:}
\newcommand{\bn}{\mathbf{0}}
\title{Power-law distributions for the free path length in Lorentz gases}
\author{Jens Marklof}
\author{Andreas Str\"ombergsson}
\address{School of Mathematics, University of Bristol,
Bristol BS8 1TW, U.K.\newline
\rule[0ex]{0ex}{0ex} \hspace{8pt}{\tt j.marklof@bristol.ac.uk}}
\address{Department of Mathematics, Box 480, Uppsala University,
SE-75106 Uppsala, Sweden\newline
\rule[0ex]{0ex}{0ex} \hspace{8pt}{\tt astrombe@math.uu.se}}
\date{\today}
\thanks{The research leading to these results has received funding from the European Research Council under the European Union's Seventh Framework Programme (FP/2007-2013) / ERC Grant Agreement n. 291147. 
J.M.\ is furthermore supported by a Royal Society Wolfson Research Merit Award, and A.S.\ is a Royal Swedish Academy of Sciences Research Fellow supported by a grant from the Knut and Alice Wallenberg Foundation.}
\begin{document}

\begin{abstract}
It is well known that, in the Boltzmann-Grad limit, the distribution of the free path length in the Lorentz gas with disordered scatterer configuration has an exponential density. If, on the other hand, the scatterers are located at the vertices of a Euclidean lattice, the density has a power-law tail proportional to $\xi^{-3}$. In the present paper we construct scatterer configurations whose free path lengths have a distribution with tail $\xi^{-N-2}$ for any positive integer $N$. We also discuss the properties of the random flight process that describes the Lorentz gas in the Boltzmann-Grad limit. The convergence of the distribution of the free path length follows from equidistribution of large spheres in products of certain homogeneous spaces, which in turn is a consequence of Ratner's measure classification theorem.
\end{abstract}

\maketitle

\section{Introduction}

The Lorentz gas \cite{Lorentz05} describes the dynamics of non-interacting point particles in an array of fixed spherical scatterers of radius $\rho$, centered at the elements of a point set $\scrP\subset\RR^d$, with $d\geq2$. We assume that $\scrP$ has unit density, i.e., for any bounded $\scrD\subset\RR^d$ with boundary of measure zero,
\begin{equation}\label{density000}
\lim_{T\to\infty} \frac{ \#(\scrP\cap T \scrD)}{T^d} = \vol(\scrD) .
\end{equation}
Each particle travels with constant velocity along straight lines until it enters a scatterer where it is deflected, e.g. by elastic reflection (as in the classic setting of the Lorentz gas) or by the force of a spherically symmetric potential. 
We denote the position and velocity at time $t$ by $\vecq(t)$ and $\vecv(t)$. Since the particle speed outside the scatterers is a constant of motion we may assume without loss of generality $\|\vecv(t)\|=1$. The dynamics thus takes place in the unit tangent bundle $\T^1(\scrK_\rho)$
where $\scrK_\rho\subset\RR^d$ is the complement of the set $\scrB^d_\rho + \scrP$; $\scrB^d_\rho$ denotes the open ball of radius $\rho$, centered at the origin. We parametrize $\T^1(\scrK_\rho)$ by $(\vecq,\vecv)\in\scrK_\rho\times\S_1^{d-1}$, where we use the convention that for $\vecq\in\partial\scrK_\rho$ the vector $\vecv$ points away from the scatterer (so that $\vecv$ describes the velocity {\em after} the collision). 
The Liouville measure on $\T^1(\scrK_\rho)$ is 
\begin{equation} \label{LIOUVILLEDEF}
	d\nu(\vecq,\vecv)=d\!\vol_{\RR^d}(\vecq)\, d\!\vol_{\S_1^{d-1}}(\vecv)
\end{equation}
where $\vol_{\RR^d}$ and $\vol_{\S_1^{d-1}}$ refer to the Lebesgue measures on $\RR^d$ %
and $\S_1^{d-1}$, respectively. 

The first collision time with respect to the initial condition $(\vecq,\vecv)\in\T^1(\scrK_\rho)$ is 
\begin{equation} \label{TAU1DEF0}
	\tau_1(\vecq,\vecv;\rho) = \inf\{ t>0 : \vecq+t\vecv \notin\scrK_\rho \}. 
\end{equation}
Since all particles are moving with unit speed, we may also refer to $\tau_1(\vecq,\vecv;\rho)$ as the free path length. The distribution of free path lengths in the limit of small scatterer density (Boltzmann-Grad limit) has been studied extensively when $\scrP$ is a fixed realisation of a random point process (such as a spatial Poisson process) \cite{Boldrighini83,Gallavotti69,Polya18,Spohn78} and when $\scrP$ is a Euclidean lattice \cite{Boca03,Boca07,Bourgain98,Caglioti03,Dahlqvist97,Dettmann12,Golse00,partI,Nandori12,Polya18} (cf.~also the recent studies of the free path length in the honeycomb lattice \cite{Boca09,Boca10} and quasicrystals \cite{Wennberg12,quasi}). 
In both cases, the limit distribution for the free path length between consecutive collisions exists in the small scatterer limit: For a random scatterer configuration, the probability that $\tau_1>\rho^{-(d-1)} \xi$ has the limiting density (as $\rho\to 0$)
\begin{equation}\label{expony}
\overline\Phi_{\vecnull,\scrP}(\xi) = \overline\sigma\, \e^{-\overline\sigma \xi}
\end{equation}
where $\overline\sigma$ is the volume of the $(d-1)$ dimensional unit ball (this represents the total cross section of a spherical scatterer in units of the radius). If the scatterer configuration is given by a Euclidean lattice $\scrP=\scrL$, an explicit formula for the limit distribution of free path length is only known in dimension $d=2$ \cite{Boca03,Boca07,Dahlqvist97}; in higher dimension we have the tail estimates 
$\overline{\Phi}_{\vecnull,\scrL}(\xi)\sim A_d \xi^{-3}$ for large $\xi$ (the value of $A_d$ is given in Section \ref{sec:Asymptotic}), and $\overline{\Phi}_{\vecnull,\scrL}(\xi)=\frac{\overline\sigma}{\zeta(d)}+O(\xi)$, for $\xi\to 0$ \cite{partIV}.
Note that the heavy tail at infinity implies that $\overline{\Phi}_{\vecnull,\scrL}(\xi)$ has no second moment. A further non-trivial observation of \cite{partI} is that the limit distributions are independent of the choice of lattice $\scrL$ (the covolume of $\scrL$ is assumed to be one). We will therefore set in the following $\overline{\Phi}_{\vecnull}(\xi):=\overline{\Phi}_{\vecnull,\scrL}(\xi)$.

In this study we assume that the scattering configuration is given by a finite union of distinct Euclidean lattices. That is,  
\begin{equation}
\scrP = \bigcup_{i=1}^N \scrL_i
\end{equation}
where $\scrL_i$ are (possibly shifted) Euclidean lattices of covolume $\nbar_i^{-1}$.
We will assume that the lattices are pairwise incommensurable 
(we will give a precise definition of this in Section \ref{sec:Free} below);
this ensures among other things that each intersection $\scrL_i\cap\scrL_j$ ($i\neq j$) is contained in some 
affine subspace of dimension less than $n$, and thus the density of $\scrP$ is $\nbar_1+\ldots+\nbar_N$. 
We impose the normalizing condition $\nbar_1+\ldots+\nbar_N=1$. 

We will prove that %
the distribution of free path lengths has a limit density with tails
\begin{align}\label{PHIBARXILARGETHMRES1111}
\overline{\Phi}_{\vecnull,\scrP}(\xi)\sim
C_\infty \,\xi^{-(N+2)}
\qquad \text{as } \: \xi\to\infty,
\end{align}
and
\begin{equation}
\overline{\Phi}_{\vecnull,\scrP}(\xi)=C_0+O(\xi), \qquad \text{as } \: \xi\to 0;
\end{equation}
see Theorems \ref{thm:tail} and \ref{thm:tail0} in Section \ref{sec:Asymptotic} below for explicit formulas for the constants $C_0$ and $C_\infty$.

In the Boltzmann-Grad limit, the Lorentz process in fact may converge to a random flight process; this has been proved in the case of random $\scrP$ in \cite{Gallavotti69,Spohn78,Boldrighini83}, and in the case of Euclidean lattices $\scrP=\scrL$ in \cite{partII}. In the random setting, this limiting process is governed by the linear Boltzmann equation as originally suggested by Lorentz \cite{Lorentz05}. In the periodic setting, the linear Boltzmann equation has to be replaced by a more general transport equation. We will argue that the same applies in the setting studied in this paper (Section \ref{sec:Boltzmann}), and establish a limit theorem for the transition kernel that characterises the transport equation (Section \ref{sec:Transition}). The key technical ingredient in our proofs (which follow closely the strategy developed in \cite{partI}) is an application of Ratner's measure classification theorem to establish equidistribution in products of certain homogeneous spaces (Section \ref{sec:Equidistribution}).

\section{Free path length}\label{sec:Free}

We will begin by describing our results on the distribution of free path lengths for initial data of the form $(\vecq+\rho \vecbeta(\vecv),\vecv)$, where $\vecv\in\S_1^{d-1}$ is random, $\vecq\in\RR^d$ is fixed and $\vecbeta:\S_1^{d-1}\to\RR^{d}$ is some fixed continuous function. For $\vecq\notin\scrP$, the free path length $\tau_1(\vecq+\rho\vecbeta(\vecv),\vecv;\rho)$ is evidently well defined for $\rho$ sufficiently small. If $\vecq\in\scrP$, we assume in the following that $\vecbeta$ is chosen so that
the ray $\vecbeta(\vecv)+\R_{\geq 0}$ lies completely outside the ball $\scrB_1^d$ for all $\vecv\in\S^{d-1}_1$
(thus $\rho\vecbeta(\vecv)+\R_{\geq 0}$ lies outside $\scrB_\rho^d$ for all $\rho>0$);
this is to avoid any initial condition where the particle starts inside the scatterer. 

We let $\scrS$ be the commensurator of $\SL(d,\Z)$ in $\SL(d,\R)$. Thus
\begin{align*}
\scrS=
\{(\det T)^{-1/d}T\col T\in\GL(d,\Q),\:\det T>0\}.
\end{align*}
Cf.\ \cite[Thm.\ 2]{borel}, as well as \cite[Sec.\ 7.3]{studenmund}.
We say that the matrices {\em $M_1,\ldots,M_N\in\SLR$ are pairwise incommensurable} if $M_i M_j^{-1}\notin\scrS$ for all $i\neq j$. A simple example is 
\begin{equation}
M_i = \zeta^{-i/d} \begin{pmatrix} \zeta^{i} & 0 \\ 0 & 1_{d-1} \end{pmatrix}, \qquad i=1,\ldots,N,
\end{equation}
where $\zeta$ is any positive number such that $\zeta,\zeta^2,\ldots,\zeta^{N-1}\notin\Q$.

\begin{thm}\label{freeThm}
Fix $N$ affine lattices $\scrL_i=\nbar_i^{-1/d} (\Z^d+\vecomega_i) M_i$, $i=1,\ldots,N$, with $\vecomega_i\in\R^d$, $M_i\in\SL(d,\R)$ pairwise incommensurable, and $\nbar_i>0$ such that $\nbar_1+\ldots+\nbar_N=1$.
Let $\vecq\in\R^d$ and set $\vecalf_i=\vecomega_i-\nbar_i^{1/d} \vecq M_i^{-1}$.
Then, for every $\xi\geq 0$,
\begin{equation}\label{FPL}
\lim_{\rho\to 0} \lambda(\{ \vecv\in\S_1^{d-1} \col  \rho^{d-1} \tau_1(\vecq+\rho\vecbeta(\vecv),\vecv;\rho)\geq \xi \})
= \int_{\xi}^\infty \Phi_{\vecalf_1,\ldots,\vecalf_N,\vecbeta}(\xi')\, d\xi'  
\end{equation}
with
\begin{equation}\label{limid}
\Phi_{\vecalf_1,\ldots,\vecalf_N,\vecbeta}(\xi)= -\frac{d}{d\xi} 
\prod_{i=1}^N\int_{\nbar_i \xi}^\infty \Phi_{\vecalf_i,\vecbeta}(\xi')\, d\xi',
\end{equation}
where $\Phi_{\vecalf,\vecbeta}(\xi)$ is the continuous probability density on $\R_{> 0}$ defined in \cite[Eq.~(4.6)]{partI}.
\end{thm}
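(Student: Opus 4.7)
The plan is to exploit the elementary observation that the free path length for the union $\scrP=\bigcup_{i=1}^N\scrL_i$ is the minimum of the $N$ free path lengths $\tau_1^{(i)}$ associated with the single lattices $\scrL_i$, and to reduce the required joint asymptotics to an equidistribution statement on a product of homogeneous spaces.

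First I would rewrite the target quantity as
\begin{equation*}
\lambda\bigl(\{\vecv\in\S_1^{d-1}\col \rho^{d-1}\tau_1^{(i)}(\vecq+\rho\vecbeta(\vecv),\vecv;\rho)\geq\xi\text{ for all }i=1,\ldots,N\}\bigr).
\end{equation*}
Rescaling each lattice $\scrL_i$ by $\nbar_i^{1/d}$ turns it into the unit-covolume lattice $(\Z^d+\vecomega_i)M_i$; in these rescaled coordinates the $i$th event becomes $\{\rho_i^{d-1}\tau_1'\geq\nbar_i\xi\}$ with $\rho_i=\nbar_i^{1/d}\rho$, and the relevant translate is exactly $\vecalf_i=\vecomega_i-\nbar_i^{1/d}\vecq M_i^{-1}$.

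Following the strategy of \cite{partI}, each single-lattice event is expressed as the integral over $\vecv\in\S_1^{d-1}$ of the indicator of a fixed test set in $X_i:=\ASLZ\backslash\ASLR$, evaluated along a map $g_i(\,\cdot\,;\rho)$ depending on the lattice data $(M_i,\vecomega_i)$ and the expanding horospherical/geodesic parameter $\rho_i$. The joint event is therefore the integral over $\vecv$ of a function on the product space $X:=\prod_{i=1}^N X_i$, evaluated along the simultaneous lift $\vecv\mapsto\bigl(g_1(\vecv;\rho),\ldots,g_N(\vecv;\rho)\bigr)$. The main step is to show that this lifted sphere equidistributes on $X$ with respect to the product of the Haar probability measures as $\rho\to 0$.

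This joint equidistribution is the Ratner-type result established in Section \ref{sec:Equidistribution} and constitutes the hard part of the proof. The argument proceeds by Ratner's measure classification theorem applied to the closure of the relevant unipotent orbit in $X$. The only proper closed invariant subgroups of $X$ that could support a different limit correspond to algebraic relations between the factors, and for the present setup such relations translate to some $M_iM_j^{-1}$ lying in the commensurator $\scrS$ of $\SLZ$ in $\SLR$. The pairwise incommensurability hypothesis rules this out, forcing the limit measure to be the full product Haar measure. Granted the product equidistribution, the joint probability factorizes into the product of the marginals, each equal to $\int_{\nbar_i\xi}^\infty\Phi_{\vecalf_i,\vecbeta}(\xi')\,d\xi'$ by the single-lattice result of \cite{partI}. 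Applying $-\frac{d}{d\xi}$ to this product yields the density $\Phi_{\vecalf_1,\ldots,\vecalf_N,\vecbeta}$ in \eqref{limid}.
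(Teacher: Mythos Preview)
Your proposal is correct and follows essentially the same approach as the paper: reduce to joint equidistribution on a product of homogeneous spaces (Theorem~\ref{equiThm9}) via Ratner's theorem, with incommensurability ruling out smaller invariant measures, and then read off the product formula from the single-lattice results of \cite{partI}. One minor refinement the paper makes explicit is that the factor $X_i$ is $\Gamma(q_i)\backslash\SLR$ rather than $\ASLZ\backslash\ASLR$ when $\vecalf_i\in\Q^d$, but this does not alter your overall strategy.
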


Let us fix a map $K:\S_1^{d-1}\to\SO(d)$ such that
$\vecv K(\vecv)=\vece_1$ for all $\vecv\in\S_1^{d-1}$;
we assume that $K$ is smooth when restricted to $\S_1^{d-1}$ minus
one point (see \cite[footnote 3, p.~1968]{partI} for an explicit construction). We show in \cite{partI} that for $\vecalf\notin\Q^d$, the density $\Phi(\xi):=\Phi_{\vecalf,\vecbeta}(\xi)$ is independent of $\vecalf$ and $\vecbeta$. If $\vecalf\in\Z^d$, then $\Phi_{\vecalf,\vecbeta}(\xi)=\Phi_{\vecnull,\vecbeta}(\xi)$ can be expressed as 
\begin{equation}
\Phi_{\vecnull,\vecbeta}(\xi) = \int_{\S_1^{d-1}} \Phi_\vecnull(\xi,(\vecbeta(\vecv) K(\vecv))_\perp) d\lambda(\vecv),
\end{equation}
where $\vecx_\perp$ denotes the orthogonal projection of $\vecx\in\R^d$ onto $\{0\}\times\RR^{d-1}$, 
and $\Phi_\vecnull(\xi,\vecz)$ is a probability density on $\R_{>0}$ for any fixed $\vecz\in\{0\}\times\R^{d-1}$. We will call $\vecs(\vecv):=(\vecbeta(\vecv) K(\vecv))_\perp\in\scrB_1^{d-1}$ the exit parameter.

Let us discuss two special examples of the limit density \eqref{limid}. In the first, we are shooting off from a scatterer centered at a lattice point $\vecq\in\scrL_j$ (i.e., $\vecalf_j\in\Z^d$) for some $j$, such that $\vecalf_i\notin\Q^d$ for each $i\neq j$. 
We remark that this condition, with $j$ depending on $\vecq$,
is satisfied for all points in $\vecq\in\scrP$ outside a finite union of affine subspaces of dimension less than $n$;
this is because of our assumption about pairwise incommensurability.
In particular the condition holds for asymptotically all points $\vecq\in\scrP$.
For such a starting point $\vecq$, the right hand side of \eqref{limid} is
\begin{equation}\label{limid2}
\Phi_{\vecalf_1,\ldots,\vecalf_N,\vecbeta}(\xi)= 
\int_{\S_1^{d-1}}\Phi_{\vecnull}^{(j)}(\xi, (\vecbeta(\vecv) K(\vecv))_\perp)\,d\lambda(\vecv)
\end{equation}
where 
\begin{equation}\label{Phi0P}
\Phi_{\vecnull}^{(j)}(\xi,\vecz)
= -\frac{d}{d\xi}  \bigg( \int_{\nbar_j \xi}^\infty \Phi_{\vecnull}(\xi',\vecz) d\xi' \; \prod_{\substack{i=1 \\ i\neq j}}^N \int_{\nbar_i \xi}^\infty \Phi(\xi') d\xi' \bigg).
\end{equation}
The distribution of the free path length between consecutive collisions is
\begin{equation}\label{PhiP0}
\overline\Phi_{\vecnull,\scrP}(\xi) = \frac{1}{\overline\sigma} \sum_{j=1}^N \nbar_j \, \int_{\scrB_1^{d-1}} \Phi_{\vecnull}^{(j)}(\xi,\vecz) \, d\vecz.
\end{equation}

In the second example, we launch a particle from a generic point $\vecq$, by which we mean here  $\vecalf_i\notin\Q^d$ for all $i$. Now the right hand side of \eqref{FPL} yields 
\begin{equation}\label{PhiP}
\Phi_{\vecalf_1,\ldots,\vecalf_N,\vecbeta}(\xi) =\Phi_{\scrP}(\xi)
:= -\frac{d}{d\xi} \prod_{i=1}^N \int_{\nbar_i \xi}^\infty \Phi(\xi') d\xi' .
\end{equation}
This density is, as in the single-lattice setting, independent of $\vecbeta$.
We recall from \cite{partI},
\begin{equation}
\overline\Phi_{\vecnull}(\xi)=\frac1{\overline\sigma}\int_{\scrB_1^{d-1}}\Phi_\bn(\xi,\vecz)\,d\vecz 
=  - \frac{1}{\overline\sigma}\; \frac{d}{d\xi} \Phi(\xi) .
\end{equation}
With the above relations, this implies the following relation between the distribution of the free path length between consecutive collisions and the distribution of the distance to the first scatter from a generic point,
\begin{equation}
\overline\Phi_{\vecnull,\scrP}(\xi) =  - \frac{1}{\overline\sigma}\; \frac{d}{d\xi} \Phi_{\scrP}(\xi) .
\end{equation}

\section{Asymptotic tails}\label{sec:Asymptotic}

For the distribution of the free path length in a single lattice, we have proved in \cite{partIV} that for $\xi$ small
\begin{align}\label{0asymp}
\overline{\Phi}_\bn(\xi)=\frac{\overline\sigma}{\zeta(d)}+O(\xi);
\qquad
\Phi(\xi)=\overline\sigma-\frac{\overline\sigma^2}{\zeta(d)}\xi+O(\xi^2) .
\end{align}
The tail asymptics for large $\xi$ are
\begin{align}\label{PHIXILARGETHMRES1}
\Phi(\xi)=
\frac{\overline\sigma}{2}\, A_d\, \xi^{-2}
+O\bigl(\xi^{-2-\frac 2d}\bigr)
\qquad \text{as } \: \xi\to\infty,
\end{align}
\begin{align}\label{PHIBARXILARGETHMRES1}
\overline{\Phi}_\bn(\xi)=
A_d\, \xi^{-3}
+O\bigl(\xi^{-3-\frac 2d}\log\xi\bigr)
\qquad \text{as } \: \xi\to\infty,
\end{align}
where 
\begin{equation}
A_d=\frac{2^{2-d}}{d(d+1)\zeta(d)}
\end{equation}
and the $\log\xi$ may be replaced by $1$ when $d\neq 3$.
These relations imply, by an elementary calculation, the following asymptotic estimates for the free path lengths \eqref{Phi0P} and \eqref{PhiP}:

\begin{thm}\label{thm:tail}
For $\xi\to\infty$,
\begin{equation}
\overline\Phi_{\vecnull,\scrP}(\xi) = 
\frac{N(N+1) A_d^N \overline\sigma^{N-1}}{2^N \nbar_1\cdots\nbar_N} \; \xi^{-(N+2)}
\bigl(1+O(\xi^{-\frac2d}\log\xi)\bigr),
\end{equation}
\begin{equation}
\Phi_{\scrP}(\xi) = 
\frac{N A_d^N \overline\sigma^{N}}{2^N \nbar_1\cdots\nbar_N} \; \xi^{-(N+1)}
\bigl(1+O(\xi^{-\frac2d})\bigr),
\end{equation}
where, in the first expression, $\log\xi$ may be replaced by $1$ when $d\neq 3$.
\end{thm}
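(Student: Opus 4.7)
The plan is to reduce everything to the single-lattice asymptotics already recorded, namely $\Phi(\xi)=\frac{\overline\sigma A_d}{2}\xi^{-2}+O(\xi^{-2-2/d})$ and $\overline\Phi_\vecnull(\xi)=A_d\xi^{-3}+O(\xi^{-3-2/d}\log\xi)$ (with $\log\xi$ replaceable by $1$ when $d\ne 3$), plugged into the product formulas \eqref{PhiP} and $\overline\Phi_{\vecnull,\scrP}=-\overline\sigma^{-1}\frac{d}{d\xi}\Phi_\scrP$. Writing $F(\xi):=\int_\xi^\infty\Phi(\xi')\,d\xi'$, the first task is to integrate the tail estimate term by term to get
\begin{equation*}
F(\xi)=\tfrac{\overline\sigma A_d}{2}\,\xi^{-1}+O(\xi^{-1-2/d}).
\end{equation*}
Then I differentiate $\prod_{i=1}^N F(\nbar_i\xi)$ once to obtain the closed form
\begin{equation*}
\Phi_\scrP(\xi)=\sum_{j=1}^N\nbar_j\,\Phi(\nbar_j\xi)\prod_{i\ne j}F(\nbar_i\xi),
\end{equation*}
which has the virtue that all $\xi$-derivatives have been replaced by the functions $\Phi$ and $F$, for which we have \emph{pointwise} asymptotic information.

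Substituting the asymptotic for $\Phi(\nbar_j\xi)=\frac{\overline\sigma A_d}{2}(\nbar_j\xi)^{-2}+O(\xi^{-2-2/d})$ and the analogous expression for $F(\nbar_i\xi)$ into the $j$-th summand yields a main term of size $\xi^{-(N+1)}$ multiplied by $\nbar_j^{-1}\prod_{i\ne j}\nbar_i^{-1}=1/(\nbar_1\cdots\nbar_N)$; summing over $j$ produces a factor $N$, giving the claimed constant $N A_d^N\overline\sigma^N/(2^N\nbar_1\cdots\nbar_N)$. Every cross-term in the expansion is bounded by $\xi^{-(N+1)-2/d}$, which gives the error $O(\xi^{-2/d})$ relative to the main term. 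This settles the second assertion.

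For $\overline\Phi_{\vecnull,\scrP}$ I differentiate the displayed expression for $\Phi_\scrP$ once more. This produces two kinds of summands: (a) terms with $\Phi'(\nbar_j\xi)\prod_{i\ne j}F(\nbar_i\xi)$, for which I substitute $\Phi'=-\overline\sigma\,\overline\Phi_\vecnull$ and use \eqref{PHIBARXILARGETHMRES1}; and (b) terms with a product $\Phi(\nbar_j\xi)\Phi(\nbar_k\xi)\prod_{i\ne j,k}F(\nbar_i\xi)$, $j\ne k$. Each contribution is $\xi^{-(N+2)}$ times a rational monomial in the $\nbar_i$; after telescoping, type (a) supplies a factor $2N$ and type (b) a factor $N(N-1)$, and their sum $N(N+1)$ is the advertised coefficient. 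The $\log\xi$ in the error bound appears only through the type (a) contribution in the single exceptional dimension $d=3$.

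There is no real obstacle here — the proof is bookkeeping — but the one place that needs care is that I must not formally differentiate the $O$-expression for $\Phi$; instead, the exit from differentiation into pointwise asymptotics happens via the identities $F'=-\Phi$ and $\Phi'=-\overline\sigma\,\overline\Phi_\vecnull$, so that all asymptotic inputs are used in their original form from \cite{partIV}. Once the derivatives are organised this way, the computation is entirely algebraic and the error terms propagate through products by the trivial bound $\prod(a_i+\ve_i)=\prod a_i+O(\max|\ve_i/a_i|)\prod a_i$.
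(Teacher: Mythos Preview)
Your proposal is correct and is precisely the ``elementary calculation'' the paper alludes to: the paper gives no detailed proof beyond stating that the single-lattice asymptotics \eqref{0asymp}--\eqref{PHIBARXILARGETHMRES1} plugged into \eqref{Phi0P} and \eqref{PhiP} yield the result. Your care in avoiding differentiation of $O$-terms by routing all derivatives through the exact identities $F'=-\Phi$ and $\Phi'=-\overline\sigma\,\overline\Phi_\vecnull$ is exactly the right way to make that calculation rigorous, and your bookkeeping of the coefficients ($2N$ from the type~(a) terms, $N(N-1)$ from type~(b), summing to $N(N+1)$) checks out.
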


\begin{thm}\label{thm:tail0}
For $\xi\geq 0$,
\begin{equation}
\overline\Phi_{\vecnull,\scrP}(\xi) = \overline\sigma \bigg( 1-\bigg(1-\frac{1}{\zeta(d)}\bigg)  \sum_{j=1}^N \nbar_j^2 \bigg) +O(\xi) ,
\end{equation}
\begin{equation}
\Phi_{\scrP}(\xi) = \overline\sigma -  \overline\sigma^2  \bigg( 1-\bigg(1-\frac{1}{\zeta(d)}\bigg)  \sum_{j=1}^N \nbar_j^2 \bigg) \; \xi +O(\xi^2) .
\end{equation}

\end{thm}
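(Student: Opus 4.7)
The plan is to expand the product formula \eqref{PhiP} to second order in $\xi$ using the small-$\xi$ expansion \eqref{0asymp}, then use the relation $\overline\Phi_{\vecnull,\scrP}(\xi)=-\overline\sigma^{-1}(d/d\xi)\Phi_{\scrP}(\xi)$ to derive the first asymptotic. There is no real obstacle here; it is an elementary Taylor calculation, and the main bookkeeping step is to recognize the identity $\sum_{j\neq k}\nbar_j\nbar_k=1-\sum_j\nbar_j^2$ when rearranging terms.

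Concretely, I would set $F_i(\xi):=\int_{\nbar_i\xi}^\infty \Phi(\xi')\,d\xi'$ for $i=1,\ldots,N$. Since $\Phi$ is a probability density on $\R_{>0}$, $F_i(0)=1$. From \eqref{0asymp}, $F_i'(\xi)=-\nbar_i\Phi(\nbar_i\xi)$ and $F_i''(\xi)=-\nbar_i^2\Phi'(\nbar_i\xi)$, so
\begin{equation*}
F_i'(0)=-\nbar_i\overline\sigma,\qquad F_i''(0)=\frac{\nbar_i^2\overline\sigma^2}{\zeta(d)},
\end{equation*}
and hence $F_i(\xi)=1-\nbar_i\overline\sigma\,\xi+\frac{1}{2}\nbar_i^2\overline\sigma^2\zeta(d)^{-1}\xi^2+O(\xi^3)$.

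Next, by the product rule,
\begin{equation*}
\frac{d}{d\xi}\prod_{i=1}^N F_i(\xi)\Big|_{\xi=0}=\sum_{j}F_j'(0)=-\overline\sigma\sum_j\nbar_j=-\overline\sigma,
\end{equation*}
using $\sum_j\nbar_j=1$, which by \eqref{PhiP} gives $\Phi_\scrP(0)=\overline\sigma$. Differentiating once more,
\begin{equation*}
\frac{d^2}{d\xi^2}\prod_{i=1}^N F_i(\xi)\Big|_{\xi=0}=\sum_{j}F_j''(0)+\sum_{j\neq k}F_j'(0)F_k'(0)=\frac{\overline\sigma^2}{\zeta(d)}\sum_j\nbar_j^2+\overline\sigma^2\Bigl(1-\sum_j\nbar_j^2\Bigr),
\end{equation*}
where in the second sum I used $\bigl(\sum_j\nbar_j\bigr)^2-\sum_j\nbar_j^2=1-\sum_j\nbar_j^2$. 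Collecting, the right hand side equals $\overline\sigma^2\bigl[1-(1-\zeta(d)^{-1})\sum_j\nbar_j^2\bigr]$.

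Therefore
\begin{equation*}
\Phi_\scrP(\xi)=\overline\sigma-\overline\sigma^2\Bigl(1-\bigl(1-\tfrac{1}{\zeta(d)}\bigr)\sum_{j=1}^N\nbar_j^2\Bigr)\xi+O(\xi^2),
\end{equation*}
which is the second claim. Finally, applying $\overline\Phi_{\vecnull,\scrP}(\xi)=-\overline\sigma^{-1}\,(d/d\xi)\,\Phi_\scrP(\xi)$, the constant term becomes $\overline\sigma\bigl(1-(1-\zeta(d)^{-1})\sum_j\nbar_j^2\bigr)$, which is the first claim. The $O(\xi^2)$ remainder in $\Phi_\scrP$ produces the $O(\xi)$ remainder in $\overline\Phi_{\vecnull,\scrP}$ after differentiation, provided one notes that the $O(\xi^2)$ from the expansion of $\Phi$ propagates through the product in a uniform way — which follows from \eqref{0asymp} since $\Phi$ is $C^1$ up to $\xi=0$ with bounded second-order remainder on any compact interval.
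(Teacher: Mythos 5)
Your calculation is correct and is exactly the elementary Taylor expansion the paper has in mind; the algebra checks out, and the identity $\sum_{j\neq k}\nbar_j\nbar_k=1-\sum_j\nbar_j^2$ is indeed the key bookkeeping step. One small point worth tightening: the final remark that the $O(\xi^2)$ remainder in $\Phi_\scrP$ ``produces'' an $O(\xi)$ remainder after differentiation because $\Phi$ is $C^1$ is not quite a valid inference (a $C^1$ function with an $O(\xi^2)$ Taylor remainder need not have $\Phi'(\xi)=\Phi'(0)+O(\xi)$); the clean justification is instead to invoke the \emph{first} relation in \eqref{0asymp}, $\overline\Phi_\bn(\xi)=\frac{\overline\sigma}{\zeta(d)}+O(\xi)$, together with $\overline\Phi_\bn=-\overline\sigma^{-1}\Phi'$, which gives directly $\Phi'(\xi)=-\frac{\overline\sigma^2}{\zeta(d)}+O(\xi)$ and hence $F_i''(\xi)=F_i''(0)+O(\xi)$, so that $P''(\xi)=P''(0)+O(\xi)$ and both asymptotics follow at once.
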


\section{A macroscopic transport equation}\label{sec:Boltzmann}

As the mean free path length scales like $\rho^{-(d-1)}$, i.e., like the inverse of the total scattering cross section of an individual scatterer, we rescale all length units by introducing the macroscopic coordinates
\begin{equation}\label{macQV}
	\big(\vecQ(t),\vecV(t)\big) = \big(\rho^{d-1} \vecq(\rho^{-(d-1)} t),\vecv(\rho^{-(d-1)} t)\big) .
\end{equation}
This rescaling of length and time is called the {\em Boltzmann-Grad scaling}, and the corresponding limit as $\rho\to 0$ is called the {\em Boltzmann-Grad limit}. We define the macroscopic particle flow by $F_\rho^t(\vecQ,\vecV):=(\vecQ(t),\vecV(t))$ and consider the evolution of an initial macroscopic particle density $f_0(\vecQ,\vecV)$ which, at time $t>0$, reads $f_t^{(\rho)}(\vecQ,\vecV)=f_0(F_\rho^{-t}(\vecQ,\vecV))$.

The work of Gallavotti \cite{Gallavotti69}, Spohn \cite{Spohn78} and Boldrighini, Bunimovich and Sinai \cite{Boldrighini83} shows that, if the scatterer configuration $\scrP$ is random (e.g.~given by a typical realisation of a Poisson process studied in \cite{Boldrighini83}), then $f_t^{(\rho)}(\vecQ,\vecV)$ converges in the Boltzmann-Grad limit (in a weak sense) to a solution of the linear Boltzmann equation
\begin{equation}
		\bigg[ \frac{\partial}{\partial t} + \vecV\cdot\nabla_\vecQ \bigg] f_t(\vecQ,\vecV)= \int_{\S_1^{d-1}}  \big[ f_t(\vecQ,\vecV_0)-f_t(\vecQ,\vecV)\big] \sigma(\vecV_0,\vecV) \, d\vecV_0,
\end{equation}
as predicted by Lorentz \cite{Lorentz05}. Here $\sigma(\vecV_0,\vecV)$ denotes the differential cross section of the scatterer.  As mentioned in the introduction, the distribution of free path lengths between consecutive collisions is in this case given by the exponential distribution \eqref{expony}. 

In the case when the scatterer configuration is a single Euclidean lattice, we have shown in \cite{partII} that, for a general class of scattering maps (which include elastic reflection and muffin-tin Coulomb potentials), the particle density $f_t^{(\rho)}(\vecQ,\vecV)$ also converges as $\rho\to 0$, but the limiting density $f_t(\vecQ,\vecV)$ does not satisfy a linear transport equation; cf. also \cite{GolseToulouse2008}, \cite{CagliotiGolse}. To obtain a transport equation, consider a density $f(\vecQ,\vecV,\xi,\vecV_+)$ on an extended phase space, where $\xi\in\RR_+$ is the flight time until the next collision, and $\vecV_+\in\S_1^{d-1}$ is the velocity after the next collision. The density $f_t(\vecQ,\vecV,\xi,\vecV_+)$ satisfies a {\em generalized} linear Boltzmann equation
\begin{equation}\label{glB}
	\bigg[ \frac{\partial}{\partial t} + \vecV\cdot\nabla_\vecQ - \frac{\partial}{\partial\xi} \bigg] f_t(\vecQ,\vecV,\xi,\vecV_+) 
	= \int_{\S_1^{d-1}}  f_t(\vecQ,\vecV_0,0,\vecV) \,
p_{\vecnull}(\vecV_0,\vecV,\xi,\vecV_+) \,
d\vecV_0 .
\end{equation} 
with collision kernel
\begin{equation}
	p_{\vecnull}(\vecV_0,\vecV,\xi,\vecV_+) 
	=\sigma(\vecV,\vecV_+)\, \Phi_\vecnull\big(\xi,-\vecb(\vecV,\vecV_+),
	s(\vecV,\vecV_0)\big)
\end{equation}
where $\sigma(\vecV,\vecV_+)$ is the differential cross section and $\Phi_\vecnull(\xi,-\vecb,\vecs)$ the transition kernel to exit with parameter $\vecs$ and hit the next scatterer at time $\xi$ with impact parameter $\vecb$.
If we choose the initial condition
\begin{equation}
	\lim_{t\to 0}f_t(\vecQ,\vecV,\xi,\vecV_+) = f_0(\vecQ,\vecV)\, p(\vecV,\xi,\vecV_+) 
\end{equation}
with
	\begin{equation}
	p(\vecV,\xi,\vecV_+) := \int_\xi^\infty \int_{\S_1^{d-1}} \sigma(\vecV_0,\vecV) p_{\vecnull}(\vecV_0,\vecV,\xi',\vecV_+)\, d\vecV_0\, d\xi',
	\end{equation}
then 
\begin{equation}
f_t(\vecQ,\vecV)=\int_0^\infty \int_{\S_1^{d-1}} f(\vecQ,\vecV,\xi,\vecV_+)\, d\vecV_+\, d\xi 
\end{equation}
is the weak limit of $f_t^{(\rho)}(\vecQ,\vecV)$ as $\rho\to 0$.	
Note that $p(\vecV,\xi,\vecV_+)$ is a stationary solution of the generalized linear Boltzmann equation.

Let us now introduce the generalised linear Boltzmann equation that describes the Boltzmann-Grad limit of $f_t^{(\rho)}(\vecQ,\vecV)$ in the case of a finite union of Euclidean lattices $\scrL_i$ (as in Theorem \ref{freeThm}). We will limit our discussion to the convergence of the transition kernel in analogy with our results in the periodic setting in \cite{partI}. The weak convergence of $f_t^{(\rho)}(\vecQ,\vecV)$ to a solution of the transport equation would require a number of additional technical estimates that are beyond the scope of this short note (cf.~\cite{partII} for the periodic setting).

The particle density will now not only depend on position, velocity, time to the next hit and velocity thereafter but also on the index $j$ which indicates the lattice $\scrL_j$ that is involved in the next collision. (By assumption, the points in $\scrP$ that belong to two or more lattices have density zero, and will not be relevant for the limiting process.) The generalised linear Boltzmann equation reads thus
\begin{multline}\label{glB22}
	\bigg[ \frac{\partial}{\partial t} + \vecV\cdot\nabla_\vecQ - \frac{\partial}{\partial\xi} \bigg] f_t^{(j)}(\vecQ,\vecV,\xi,\vecV_+) \\
	= \sum_{i=1}^N \int_{\S_1^{d-1}}  f_t^{(i)}(\vecQ,\vecV_0,0,\vecV) 
p_{\vecnull}^{(i\to j)}(\vecV_0,\vecV,\xi,\vecV_+) \,
d\vecV_0 .
\end{multline}
The original particle density is recovered from the relation 
\begin{equation}
f_t(\vecQ,\vecV)=\sum_{i=1}^N \int_0^\infty \int_{\S_1^{d-1}} f^{(i)}(\vecQ,\vecV,\xi,\vecV_+)\, d\vecV_+\, d\xi .
\end{equation}
The right hand side of \eqref{glB22} involves a new collision kernel $p_{\vecnull}^{(i\to j)}(\vecV_0,\vecV,\xi,\vecV_+)$, given by
\begin{equation}
	p_{\vecnull}^{(i\to j)}(\vecV_0,\vecV,\xi,\vecV_+) 
	=\sigma(\vecV,\vecV_+)\, \Phi_\vecnull^{(i\to j)}\big(\xi,-\vecb(\vecV,\vecV_+),
	\vecs(\vecV,\vecV_0)\big)
\end{equation}
where $\sigma(\vecV,\vecV_+)$ is the differential cross section and $\Phi_\vecnull^{(i\to j)}(\xi,-\vecb,\vecs)$ the transition probability density to hit the next scatterer in $\scrL_j$ at time $\xi$ with impact parameter $\vecb$, given that the particle exits a previous scatterer in $\scrL_i$ with parameter $\vecs$. So $\Phi_\vecnull^{(i\to j)}(\xi,\vecw,\vecz)$ is a conditional probability density with
\begin{equation}\label{conditor} %
\sum_{j=1}^N \int_0^\infty \int_{\scrB_1^{d-1}} \Phi_\vecnull^{(i\to j)}(\xi,\vecw,\vecz) \,d\vecw\, d\xi =1 
\end{equation}
for all $\vecz$.

The stationary solution of \eqref{glB22} is $f_t^{(j)}(\vecQ,\vecV,\xi,\vecV_+) =p^{(j)}(\vecV,\xi,\vecV_+)$ where
\begin{equation}
	p^{(j)}(\vecV,\xi,\vecV_+) := \sum_{i=1}^N \nbar_i \int_\xi^\infty \int_{\S_1^{d-1}} \sigma(\vecV_0,\vecV) p_{\vecnull}^{(i\to j)}(\vecV_0,\vecV,\xi',\vecV_+)\, d\vecV_0\, d\xi' .
\end{equation}
In particular we have
\begin{equation}
	p^{(j)}(\vecV,0,\vecV_+) = \nbar_j  \sigma(\vecV,\vecV_+) .
\end{equation}
This follows from eq.~\eqref{pj} below.

Let us now discuss the convergence of the microscopic transition probabilities to the transition kernel $\Phi_\vecnull^{(i\to j)}\big(\xi,\vecw,\vecz)$, and provide explicit formulas in terms of the single-lattice distributions.

\section{Convergence of the transition kernel}\label{sec:Transition}

We are now interested in the joint distribution of the free path length (considered in Section~\ref{sec:Free}), and the precise location {\em on} the scatterer where the particle hits.

Given initial data $(\vecq,\vecv)$, we denote the position of impact on the first scatterer by
\begin{equation}
	\vecq_1(\vecq,\vecv;\rho) := \vecq+\tau_1(\vecq,\vecv;\rho) \vecv .
\end{equation}
We define the function $h_1(\vecq,\vecv;\rho)=j$ if the first scatterer hit is centered at a point $\vecy$ in the lattice $\scrL_j$. (If two ore more scatterers overlap and are hit simultaneously, we choose the scatterer belonging to the lattice with the smallest index to make $h_1(\vecq,\vecv;\rho)$ well defined; this scenario is a probability zero event.) Given the scatterer location $\vecy\in\scrL_{h_1(\vecq,\vecv;\rho)}$, we have
$\vecq_1(\vecq,\vecv;\rho)\in \S_\rho^{d-1} + \vecy$ and therefore there is a unique point
$\vecw_1(\vecq,\vecv;\rho)\in \S_1^{d-1}$ such that
$\vecq_1(\vecq,\vecv;\rho)=\rho \vecw_1(\vecq,\vecv;\rho)+\vecy$. 
It is evident that $-\vecw_1(\vecq,\vecv;\rho) K(\vecv)\in \HS$, with the hemisphere $\HS=\{\vecv=(v_1,\ldots,v_d)\in\S_1^{d-1} \col v_1>0\}$. The impact parameter of the first collision is $\vecb=(\vecw_1(\vecq,\vecv;\rho) K(\vecv))_\perp$.

As in Section \ref{sec:Free}, we will use the initial data $(\vecq+\rho\vecbeta(\vecv),\vecv)$ for fixed $\vecq$ and $\vecbeta$, and use the shorthand $\tau_1=\tau_1(\vecq+\rho\vecbeta(\vecv),\vecv;\rho)$, $\vecw_1=\vecw_1(\vecq+\rho\vecbeta(\vecv),\vecv;\rho)$ and $h_1=h_1(\vecq+\rho\vecbeta(\vecv),\vecv;\rho)$.

\begin{thm}\label{exactpos1}
Fix $N$ affine lattices $\scrL_i=\nbar_i^{-1/d} (\Z^d+\vecomega_i) M_i$, $i=1,\ldots,N$, with $\vecomega_i\in\R^d$, $M_i\in\SL(d,\R)$ pairwise incommensurable, and $\nbar_i>0$ such that $\nbar_1+\ldots+\nbar_N=1$.
Let $\vecq\in\R^d$ and set $\vecalf_i=\vecomega_i-\nbar_i^{1/d} \vecq M_i^{-1}$.
Then for any Borel probability measure 
$\lambda$ on $\S_1^{d-1}$ absolutely 
continuous with respect to $\vol_{\S_1^{d-1}}$, any
subset $\fU\subset\HS$ with $\vol_{\S_1^{d-1}}(\partial\fU)=0$, 
and any $0\leq a< b$, we have
\begin{multline} \label{exactpos1eq}
\lim_{\rho\to 0}  \lambda\bigl(\bigl\{ \vecv\in\S_1^{d-1} \col 
\rho^{d-1} \tau_1\in [a,b), \:  
-\vecw_1K(\vecv)\in\fU,\: h_1=j\bigr\}\bigr) \\
=\int_{a}^{b} \int_{\fU_\perp} \int_{\S_1^{d-1}} 
\Phi_{\vecalf_1,\ldots,\vecalf_N}^{(j)}\bigl(\xi,\vecw,(\vecbeta(\vecv)K(\vecv))_\perp\bigr) 
\, d\lambda(\vecv)\, d\vecw \, d\xi,
\end{multline}
where
\begin{equation}
\Phi_{\vecalf_1,\ldots,\vecalf_N}^{(j)}(\xi,\vecw,\vecz) = \nbar_j
\Phi_{\vecalf_j}(\nbar_j\xi,\vecw,\vecz) \prod_{\substack{i=1 \\ i\neq j}}^N
\int_{\nbar_i \xi}^\infty \int_{\scrB_1^{d-1}} \Phi_{\vecalf_i}(\xi',\vecw',\vecz)\, d\vecw'\,d\xi' ,
\end{equation}
and  $\Phi_{\vecalf}(\xi,\vecw,\vecz)$ is the transition kernel for a single Euclidean lattice of covolume one.
\end{thm}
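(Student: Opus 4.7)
The plan is to follow the strategy developed in \cite{partI} for a single lattice, upgrading the underlying equidistribution statement to one on a product of homogeneous spaces. First I would reformulate the event $\{\rho^{d-1}\tau_1\in[a,b),\:-\vecw_1 K(\vecv)\in\fU,\:h_1=j\}$ as a simultaneous condition on the point patterns $\scrL_i\cap(\vecq+\rho\vecbeta(\vecv)+\text{cylinder})$ for $i=1,\ldots,N$. After rotating by $K(\vecv)$ so that the direction of motion becomes $\vece_1$ and rescaling by a factor of $\rho$, each affine lattice $\scrL_i$ is carried to an element $g_i(\vecv,\rho)=(M_i,\vecalf_i)\cdot a(\rho)\cdot k(\vecv)$ in the homogeneous space $\ASL(d,\Z)\backslash\ASL(d,\R)$, where $a(\rho)$ is the one-parameter diagonal subgroup from \cite{partI} (with a factor $\nbar_i^{1/d}$ absorbed into the time variable to account for the covolume $\nbar_i^{-1}$ of $\scrL_i$). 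The event to be computed then depends only on the $N$-tuple $(g_1(\vecv,\rho),\ldots,g_N(\vecv,\rho))$, and is a product of cylinder indicators: a "hit at time $\xi$ with impact $\vecw$" condition on the $j$-th coordinate, times "no hit before time $\xi$" conditions on the other coordinates.

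The heart of the proof is the joint equidistribution statement: as $\rho\to 0$, the push-forward of $\lambda$ by the map $\vecv\mapsto(g_1(\vecv,\rho),\ldots,g_N(\vecv,\rho))$ converges to the product of Haar measures on $\bigl(\ASL(d,\Z)\backslash\ASL(d,\R)\bigr)^{N}$. This is the content of Section~\ref{sec:Equidistribution}, and is where the pairwise incommensurability of the $M_i$ is used in an essential way. Concretely, one considers the unipotent action translating the diagonally embedded orbit $\{(g_1(\vecv,\rho),\ldots,g_N(\vecv,\rho)):\vecv\in\S_1^{d-1}\}$, applies Ratner's measure classification theorem to describe the possible limiting invariant measures, and rules out all proper intermediate subgroups by showing that any $\R$-algebraic subgroup of $\SL(d,\R)^N$ containing the diagonal copy of a unipotent generator and projecting surjectively would force $M_iM_j^{-1}\in\scrS$ for some $i\neq j$, contradicting the hypothesis (compare Borel's commensurator description cited in the excerpt).

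Once joint equidistribution is established, the limit in \eqref{exactpos1eq} becomes an integral over $\bigl(\ASL(d,\Z)\backslash\ASL(d,\R)\bigr)^{N}$ of a \emph{product} of single-lattice cylinder indicators, and hence factorizes. The $j$-th factor, integrated against Haar measure, reproduces the single-lattice transition density $\Phi_{\vecalf_j}(\nbar_j\xi,\vecw,\vecz)$ of \cite{partI} (the argument $\nbar_j\xi$ coming from the covolume rescaling absorbed into $a(\rho)$, and the overall prefactor $\nbar_j$ from the Jacobian $d(\nbar_j\xi)/d\xi$ when turning the hit-time density back into a density in the common variable $\xi$); each of the $i\neq j$ factors becomes the survival probability $\int_{\nbar_i\xi}^\infty\int_{\scrB_1^{d-1}}\Phi_{\vecalf_i}(\xi',\vecw',\vecz)\,d\vecw'\,d\xi'$. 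Combining these gives the claimed formula for $\Phi^{(j)}_{\vecalf_1,\ldots,\vecalf_N}$. The main obstacle will be Step two: beyond the clean application of Ratner, one must control mass escape to the cusps of the non-compact spaces uniformly in $\rho$ (to upgrade weak-* convergence against compactly supported test functions to convergence against the unbounded cylinder indicators) and must handle the perturbation $\vecbeta(\vecv)$ of the initial point, which affects $g_i(\vecv,\rho)$ by a factor tending to the identity but must be shown to only influence the limit through the exit parameter $\vecz=(\vecbeta(\vecv)K(\vecv))_\perp$, exactly as in the single-lattice argument of \cite{partI}.
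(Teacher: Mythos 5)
Your approach is essentially the same as the paper's: establish joint equidistribution of the $N$-tuple of orbit points under Ratner's theorem (this is the content of Section~\ref{sec:Equidistribution}, culminating in Theorem~\ref{equiThm9}), then factorize the cylinder-set probability into single-lattice pieces exactly as in \cite[Sections 6 and 9]{partI}. Two refinements are worth flagging. First, your claimed target ``product of Haar measures on $\bigl(\ASL(d,\Z)\backslash\ASL(d,\R)\bigr)^N$'' is only correct when every $\vecalf_i\notin\Q^d$; for rational $\vecalf_i$ the orbit closure in that factor is a proper sub-homogeneous space, and the paper accordingly works with the adapted spaces $X_i$ (either $\ASL(d,\Z)\backslash\ASL(d,\R)$ or a congruence quotient $\Gamma(q_i)\backslash\SL(d,\R)$) so that the limit measure is genuine Haar on each $X_i$; your factorization still goes through once this is made precise, since the single-lattice kernel $\Phi_{\vecalf_i}$ is exactly what that Haar integral produces. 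Second, ruling out proper intermediate subgroups requires more than the pairwise argument from \cite{farey}: incommensurability of $\Gamma_i$ and $\Gamma_j$ only tells you that the projection of $K$ onto the $(i,j)$-factor pair is $\SL(d,\R)^2$, and upgrading this to $K=\SL(d,\R)^N$ uses the Lie-algebra simplicity argument of Lemma~\ref{TWOTOMANYLEM}; the proof of Theorem~\ref{equiThm9} then needs a further Lie-algebra computation to incorporate the $\R^d$ factors of the $\ASL$ components. Neither of these affects the overall soundness of your plan.
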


We refer the reader to \cite{partI} for a detailed study of $\Phi_{\vecalf}(\xi,\vecw,\vecz)$. 
In particular we note in \cite[Remark 4.5]{partI} that for $\vecalf\notin\Q^d$,
the kernel $\Phi(\xi,\vecw):=\Phi_{\vecalf}(\xi,\vecw,\vecz)$ is independent of $\vecalf$ and $\vecz$.
An explicit formula of the transition kernel $\Phi_{\vecalf}(\xi,\vecw,\vecz)$ in dimension $d=2$ is derived in \cite{partIII} (cf.~also \cite{CagliotiGolse,Bykovskii09}); for asymptotics in higher dimensions $d\geq 2$ see \cite{partIV}.

As discussed earlier in the case of the free path length, a particularly relevant case is when $\vecq\in\scrL_k$ for some $k$ (i.e., $\vecalf_k\in\ZZ^d$) and $\vecq$ is generic with respect to the remaining lattices, i.e., $\vecalf_{i}\notin\QQ^d$ for $i\neq k$. In this case we have %
\begin{equation}
\Phi_{\vecnull}^{(k\to j)}(\xi,\vecw,\vecz) = \Phi_{\vecalf_1,\ldots,\vecalf_N}^{(j)}(\xi,\vecw,\vecz) ,
\end{equation}
the transition probability density considered in the previous section.
In particular, for $k=j$,
\begin{equation}
\Phi_{\vecnull}^{(j\to j)}(\xi,\vecw,\vecz) = \nbar_j
\Phi_{\vecnull}(\nbar_j\xi,\vecw,\vecz) \prod_{\substack{i=1 \\ i\neq j}}^N
\int_{\nbar_i \xi}^\infty \Phi(\xi')\, d\xi' ,
\end{equation}
and for $k\neq j$,
\begin{equation}
\Phi_{\vecnull}^{(k\to j)}(\xi,\vecw,\vecz) = \nbar_j\,
\Phi(\nbar_j\xi,\vecw)\, \Phi(\nbar_k\xi,\vecz) \prod_{\substack{i=1 \\ i\neq j,k}}^N
\int_{\nbar_i \xi}^\infty \Phi(\xi')\, d\xi' ,
\end{equation}
since 
\begin{equation}
\Phi(\xi) = \int_{\scrB_1^{d-1}} \Phi(\xi,\vecw)\, d\vecw 
\end{equation}
and, by \cite[Eq.~(6.67)]{partII},
\begin{equation}
\Phi(\xi,\vecw) =\int_{\xi}^\infty \int_{\scrB_1^{d-1}} \Phi_{\vecnull}(\xi',\vecw,\vecz)\, d\vecz\,d\xi' .
\end{equation}
We note that $\Phi_{\vecnull}^{(k\to j)}(\xi,\vecw,\vecz)$ is a conditional probability density, in the sense that
\eqref{conditor} holds.
Note also that 
\begin{equation}
\nbar_k \Phi_{\vecnull}^{(k\to j)}(\xi,\vecw,\vecz)=\nbar_j \Phi_{\vecnull}^{(j\to k)}(\xi,\vecz,\vecw)
\end{equation}
since $\Phi_{\vecnull}(\xi,\vecw,\vecz)=\Phi_{\vecnull}(\xi,\vecz,\vecw)$.
We have the following relation with the density \eqref{Phi0P},
\begin{equation}
	\Phi_\vecnull^{(i)}(\xi,\vecz) = \sum_{j=1}^N \int_{\scrB_1^{d-1}} \Phi_{\vecnull}^{(i\to j)}(\xi,\vecw,\vecz)\, d\vecw .
\end{equation}

A second important case is when $\vecq$ is generic with respect to all lattices, i.e., $\vecalf_{i}\notin\QQ^d$ for all $i=1,\ldots,N$. In this case $\Phi_{\vecalf_1,\ldots,\vecalf_N}^{(j)}(\xi,\vecw,\vecz)$ is also independent of $\vecz$; we set
\begin{equation}
\Phi^{(j)}(\xi,\vecw) := \Phi_{\vecalf_1,\ldots,\vecalf_N}^{(j)}(\xi,\vecw,\vecz) ,
\end{equation}
and have the explicit representation
\begin{equation}
\Phi^{(j)}(\xi,\vecw)  =
\nbar_j
\Phi(\nbar_j\xi,\vecw) \prod_{\substack{i=1 \\ i\neq j}}^N
\int_{\nbar_i \xi}^\infty \Phi(\xi')\, d\xi' .
\end{equation}
This implies for instance
\begin{equation}\label{transi}
\begin{split}
	\Phi^{(j)}(\xi,\vecw) & = \sum_{i=1}^N \nbar_i\int_\xi^\infty \int_{\scrB_1^{d-1}} \Phi_{\vecnull}^{(i\to j)}(\xi',\vecw,\vecz)\, d\vecz\, d\xi' \\
& =  \nbar_j \sum_{i=1}^N \int_\xi^\infty \int_{\scrB_1^{d-1}} \Phi_{\vecnull}^{(j\to i)}(\xi',\vecz,\vecw)\, d\vecz\, d\xi' \\
	& = \nbar_j  \int_\xi^\infty \Phi_\vecnull^{(j)}(\xi',\vecw)\, d\xi' ,
\end{split}
\end{equation}
and hence in particular $\Phi^{(j)}(0,\vecw) = \nbar_j$ from \eqref{conditor}. 
A simple substitution shows that the stationary solution of \eqref{glB22} can be written as
\begin{equation}\label{pj}
	p^{(j)}(\vecV,\xi,\vecV_+) 
	=\sigma(\vecV,\vecV_+)\, \Phi^{(j)}\big(\xi,\vecb(\vecV,\vecV_+)\big) .
\end{equation}
Let us now discuss the key ingredient in the proof of Theorem \ref{exactpos1}.

\section{Equidistribution in products}\label{sec:Equidistribution}

Consider the subgroup $\widehat\Gamma=\Gamma_1\times\cdots\times\Gamma_N$ in $\SLR^N$,
where each $\Gamma_i$ is a lattice in $\SLR$.
We denote by $\mu_{\widehat\Gamma}$ the unique $\SLR^N$ invariant probability measure on $\widehat\Gamma\bs \SLR^N$, and by $\varphi$ the diagonal embedding of $\SLR$ in $\SLR^N$, i.e. $\varphi(M)=(M,\ldots,M)$. 
Let us set
\begin{equation} \label{NMINUSDEF}
	n_-(\vecx)=\begin{pmatrix} 1 & \vecx \\ \trans\vecnull & 1_{d-1} \end{pmatrix} \in \SLR
\end{equation}
and
\begin{equation} \label{PHITDEF}
	\Phi^t = \begin{pmatrix} \e^{-(d-1) t} & \vecnull \\ \trans\vecnull & \e^{t}1_{d-1} \end{pmatrix} \in \SLR.
\end{equation}
Recall that two lattices $\Gamma$ and $\Gamma'$ in $\SL(d,\R)$ are said to be \textit{commensurable} if
their intersection $\Gamma\cap\Gamma'$ is also a lattice;
otherwise $\Gamma$ and $\Gamma'$ are \textit{incommensurable}.

\begin{thm}\label{equiThm2}
Let $\Gamma_1,\ldots,\Gamma_N$ be pairwise incommensurable lattices in $\SLR$. Let $\lambda$ be a Borel probability measure on $\RR^{d-1}$, absolutely continuous with respect to Lebesgue measure, and let $f:\RR^{d-1}\times \widehat\Gamma\bs \SLR^N\to\RR$ be bounded continuous. Then
\begin{equation}\label{limiteq}
	\lim_{t\to\infty} \int_{\RR^{d-1}} f\big(\vecx,\varphi(n_-(\vecx)\Phi^{t})\big)\, d\lambda(\vecx)  = \int_{\RR^{d-1}\times\widehat\Gamma\bs \SLR^N} f(\vecx,g) \, d\lambda(\vecx) \, d\mu_{\widehat\Gamma}(g) .
\end{equation}
\end{thm}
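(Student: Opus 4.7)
The strategy is to apply Ratner's measure classification theorem, adapting the single-lattice argument of \cite{partI} to the product space $\widehat\Gamma\bs\SLR^N$, with the pairwise incommensurability hypothesis used to classify the limiting invariant measures. By a standard approximation, it suffices to treat $f(\vecx,g)=\phi(\vecx)F(g)$ with $\phi$ continuous and compactly supported on $\RR^{d-1}$ and $F$ continuous and bounded on $\widehat\Gamma\bs\SLR^N$. Let $\mu_t$ denote the pushforward of $\phi\,d\lambda$ under $\vecx\mapsto\varphi(n_-(\vecx)\Phi^t)$; then \eqref{limiteq} amounts to $\mu_t\to\bigl(\int\phi\,d\lambda\bigr)\mu_{\widehat\Gamma}$ in the weak-$*$ sense on $\widehat\Gamma\bs\SLR^N$. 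The Dani-Margulis quantitative non-divergence estimate, applied factor-by-factor, shows that $\{\mu_t\}$ is tight, so every subsequence has a weak-$*$ accumulation point $\nu$, which is a probability measure on $\widehat\Gamma\bs\SLR^N$.

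The central observation is that any such $\nu$ is invariant under the diagonal unipotent subgroup $\varphi(U^-)$, where $U^-=\{n_-(\vecy):\vecy\in\RR^{d-1}\}$. Using the commutation relation $\Phi^tn_-(\vecy)=n_-(e^{-dt}\vecy)\Phi^t$, the pushforward $\mu_t\cdot\varphi(n_-(\vecy))$ is itself the pushforward of a small translate of $\phi\,d\lambda$ (shifted by $e^{-dt}\vecy$) under the same map; by absolute continuity of $\lambda$ and continuity of $\phi$, the difference from $\mu_t$ tends to zero in total variation as $t\to\infty$, whence $\nu\cdot\varphi(n_-(\vecy))=\nu$. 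Ratner's measure classification then decomposes $\nu$ into ergodic components, each supported on a closed orbit $\widehat\Gamma Hg$ of a closed connected subgroup $H\supseteq\varphi(U^-)$. Arguments of Mautner/Moore type, combined with the $\varphi(\Phi^s)$-quasi-invariance relation $\mu_{t+s}=\mu_t\cdot\varphi(\Phi^s)$ of the family itself, force any relevant $H$ to contain the full diagonal $\varphi(\SLR)$. A direct matrix calculation shows that the $\varphi(\SLR)$-orbit of $\widehat\Gamma(g_1,\ldots,g_N)$ in $\widehat\Gamma\bs\SLR^N$ is closed of finite volume iff for each pair $i\neq j$ the intersection $\Gamma_i\cap h_{ij}\Gamma_jh_{ij}^{-1}$ with $h_{ij}=g_ig_j^{-1}$ is a lattice in $\SLR$, i.e., iff each $h_{ij}$ lies in the commensurator set determined by $(\Gamma_i,\Gamma_j)$. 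Under pairwise incommensurability, this commensurator set is a countable union of cosets of finite-index subgroups, so the locus of base points with closed $\varphi(\SLR)$-orbits is a countable union of proper algebraic submanifolds of $\SLR^N$.

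The main obstacle is to upgrade this measure-zero statement to the assertion that $\nu$ places no mass on the exceptional closed orbits. This is accomplished by the Dani-Margulis linearization technique, mirroring the analogous step in \cite{partI}: one covers each exceptional closed orbit by shrinking tubes, exploits the fact that $\vecx\mapsto\varphi(n_-(\vecx)\Phi^t)$ is polynomial in $\vecx$ to control the time this curve spends in each tube, and sums over the countably many exceptional orbits to conclude that the $\mu_t$-mass of any neighborhood of the exceptional set tends to zero as $t\to\infty$. With no mass on proper closed orbits, the only remaining ergodic component of $\nu$ is the Haar measure $\mu_{\widehat\Gamma}$, so $\nu=\bigl(\int\phi\,d\lambda\bigr)\mu_{\widehat\Gamma}$; since the subsequence was arbitrary, the full convergence in \eqref{limiteq} follows.
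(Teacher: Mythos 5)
Your proposal takes a genuinely different route from the paper's. The paper's argument is essentially three citations plus one short lemma: Ratner's orbit-closure theorem produces a closed connected subgroup $K\leq\SLR^N$ with $\varphi(\SLR)\leq K$ such that $\widehat\Gamma\backslash\widehat\Gamma K$ is the closure of the diagonal orbit; Shah's theorem \cite{Shah96} then gives the equidistribution of the expanding translates towards the $K$-invariant measure on $\widehat\Gamma\backslash\widehat\Gamma K$; the identification $K=\SLR^N$ follows by combining the $N=2$ case from \cite{farey} (which forces each pairwise projection of $K$ to be all of $\SLR\times\SLR$) with the elementary Lie-algebra Lemma~\ref{TWOTOMANYLEM}; and the $\vecx$-dependence is handled by the approximation argument of \cite[Thm.~5.3]{partI}. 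You instead re-derive the content of Shah's theorem from scratch --- tightness via Dani--Margulis non-divergence, shearing to obtain $\varphi(U^-)$-invariance of weak-$*$ limits, Ratner's measure classification, and Dani--Margulis linearization. This is a legitimate and more self-contained approach, and is roughly what underlies the proof of Shah's theorem, but it is much longer to carry out rigorously; what you gain in avoiding black boxes you pay for in delicate technical estimates that the paper simply outsources.

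There is one step in your outline that needs a more careful justification and, as stated, is a gap. You claim that ``Mautner/Moore type arguments, combined with the $\varphi(\Phi^s)$-quasi-invariance relation $\mu_{t+s}=\mu_t\cdot\varphi(\Phi^s)$, force any relevant $H$ to contain the full diagonal $\varphi(\SLR)$.'' The quasi-invariance only shows that the set of accumulation points of $\{\mu_t\}$ is invariant under right translation by $\varphi(\Phi^s)$; it does not show that a \emph{fixed} accumulation point $\nu$ is $\varphi(\Phi^s)$-invariant, so the Mautner phenomenon cannot be applied to the ergodic components of $\nu$ in the way you suggest. In the standard Shah/Dani--Margulis framework this step is bypassed: one classifies the closed orbits of \emph{all} intermediate subgroups $H\supseteq\varphi(U^-)$ with $\widehat\Gamma\cap H$ a lattice (not only those with $H\supseteq\varphi(\SLR)$), and the linearization estimates show that the polynomial curve $\vecx\mapsto\varphi(n_-(\vecx)\Phi^t)$ places vanishing mass near the singular set associated to any proper such $H$. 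If you repair this step accordingly, your subsequent classification of closed $\varphi(\SLR)$-orbits via the pairwise lattice condition on $g_i^{-1}\Gamma_ig_i\cap g_j^{-1}\Gamma_jg_j$, and the observation that pairwise incommensurability pushes the exceptional base points into a countable union of proper submanifolds, is essentially correct and plays the same role as the paper's combination of \cite{farey} and Lemma~\ref{TWOTOMANYLEM}.
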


\begin{proof}
The statement is classical for $N=1$ (cf.~\cite{partI}) and proved in \cite{farey} for $N=2$. The extension to general $N$ follows from the strategy in \cite{farey}: Ratner's measure classification theorem \cite{Ratner91a,Ratner91b} implies that there exists a closed connected subgroup $K\leq \SLR^N$ with the property that (a) $\widehat\Gamma\cap K$ is a lattice in $K$, (b) $H:=\varphi(\SLR)\leq K$ and (c) $\widehat\Gamma\bs\widehat\Gamma H$ is dense in $\widehat\Gamma\bs \widehat\Gamma K$.  
Shah's theorem \cite{Shah96} then shows that (for $\vecx$-independent test functions) the limit \eqref{limiteq} exists and is given by 
\begin{equation}
\int_{\widehat\Gamma\bs \widehat\Gamma K} f(g) \, d\mu_K(g)
\end{equation}
where $\mu_K(g)$ is the unique $K$-invariant probability measure on $\widehat\Gamma\bs \widehat\Gamma K$. The proof of \cite[Theorem 2]{farey} shows that the incommensurability of $\Gamma_i$ and $\Gamma_j$ implies that the projection of $K$ onto the $i$th and $j$th factor is the group $\SLR\times\SLR$. Lemma \ref{TWOTOMANYLEM} below shows that the only $K$ with this property is $K=\SLR^N$. The case of $\vecx$-dependent test functions $f$ follows from the same argument as in the proof of Theorem 5.3 in \cite{partI}.
\end{proof}

\begin{lem}\label{TWOTOMANYLEM} 
Let $K$ be a connected Lie subgroup of $\SL(d,\R)^N$ whose projection onto the
$i$th and $j$th factor equals $\SL(d,\R)\times\SL(d,\R)$, for any $i<j$.
Then $K=\SL(d,\R)^N$.
\end{lem}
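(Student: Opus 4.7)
The plan is to induct on $N$. The base case $N=2$ is immediate, since the only pair is $(1,2)$ and the hypothesis directly gives $K=\SL(d,\R)^2$. For $N\geq 3$, I would first observe that the image $K' := \pi_{1,\ldots,N-1}(K)$ is a connected Lie subgroup of $\SL(d,\R)^{N-1}$ whose projection onto any pair $(i,j)$ with $i<j\leq N-1$ agrees with that of $K$ and so equals $\SL(d,\R)\times\SL(d,\R)$; by the inductive hypothesis $K'=\SL(d,\R)^{N-1}$. Similarly, the projection of $K$ onto the $N$-th factor equals $\SL(d,\R)$ (by taking any pair $(i,N)$).

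Next, I would pass to Lie algebras. Set $\ig=\lsl(d,\R)$ and let $\ik\subseteq\ig^N$ be the Lie algebra of $K$. The key object is
\[
\ik_0 := \{X\in\ig \col (\bn,\ldots,\bn,X)\in\ik\}.
\]
This is an ideal of $\ig$: given $Y\in\ig$, lift it (using surjectivity of the $N$-th projection) to some $(Y_1,\ldots,Y_{N-1},Y)\in\ik$, and bracket with $(\bn,\ldots,\bn,X)\in\ik$ to obtain $(\bn,\ldots,\bn,[Y,X])\in\ik$, so $[Y,X]\in\ik_0$. Since $\ig$ is simple, $\ik_0$ is either $\{0\}$ or $\ig$. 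If $\ik_0=\ig$, then $\{\bn\}^{N-1}\times\ig\subseteq\ik$, and combined with $\pi_{1,\ldots,N-1}(\ik)=\ig^{N-1}$ this immediately yields $\ik=\ig^N$; by connectedness of $K$ we get $K=\SL(d,\R)^N$, as claimed.

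The main work is to rule out the alternative $\ik_0=\{0\}$. In that case $\pi_{1,\ldots,N-1}|_\ik$ is a Lie algebra isomorphism onto $\ig^{N-1}$, so $\ik$ is the graph of a Lie algebra homomorphism $\phi:\ig^{N-1}\to\ig$. By linearity $\phi$ decomposes as $\phi(X_1,\ldots,X_{N-1}) = \sum_{j=1}^{N-1}\phi_j(X_j)$, where each $\phi_j:\ig\to\ig$ (the restriction of $\phi$ to the $j$-th summand) is itself a Lie algebra homomorphism. Simplicity of $\ig$ forces $\ker\phi_j$ to be $\{0\}$ or $\ig$, so each $\phi_j$ is either zero or injective; since $\dim\ig=d^2-1$ is finite, injective means bijective.

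A short case analysis then yields a contradiction. If two distinct $\phi_j,\phi_k$ are nonzero, their images both equal $\ig$ and must commute (the $j$-th and $k$-th summands of $\ig^{N-1}$ commute), forcing $\ig$ abelian, which is false for $d\geq 2$. If exactly one $\phi_{j_0}$ is nonzero, the projection of $\ik$ onto coordinates $(j_0,N)$ is the graph $\{(X,\phi_{j_0}(X))\col X\in\ig\}$, of dimension $d^2-1$, a proper subspace of $\ig\oplus\ig$ contradicting the pair-projection hypothesis. If all $\phi_j$ vanish, then the $N$-th projection of $\ik$ is $\{0\}$, contradicting surjectivity onto the $N$-th factor. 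Thus $\ik_0=\{0\}$ is impossible and the induction closes. The main potential obstacle is the graph identification and the decomposition $\phi=\sum_j\phi_j$ into component homomorphisms; both are straightforward once $\pi_{1,\ldots,N-1}|_\ik$ is known to be an isomorphism, after which everything reduces to the simplicity and non-abelianness of $\lsl(d,\R)$.
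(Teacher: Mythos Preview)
Your proof is correct and takes a genuinely different route from the paper's. The paper gives a direct (non-inductive) argument: working in the Lie algebra $\ik\subset\lsl(d,\R)^N$, it picks $X_1,\ldots,X_{N-1}\in\lsl(d,\R)$ whose iterated bracket $X=[\ldots[[X_1,X_2],X_3],\ldots,X_{N-1}]$ is nonzero (possible by simplicity), then uses the pair-projection hypothesis to lift each $X_j$ to an element $Y_j\in\ik$ with $p_1(Y_j)=X_j$ and $p_{j+1}(Y_j)=0$. The iterated bracket of the $Y_j$ is then $(X,0,\ldots,0)\in\ik$, so the ideal $\il=\{Z:(Z,0,\ldots,0)\in\ik\}$ is nonzero, hence all of $\lsl(d,\R)$; symmetry finishes. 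Your approach instead peels off the last factor by induction: you show that if the kernel of $\pi_{1,\ldots,N-1}|_\ik$ were trivial, $\ik$ would be the graph of a homomorphism $\phi:\ig^{N-1}\to\ig$, and then a short case analysis on the component maps $\phi_j$ (using simplicity and non-abelianness) derives a contradiction with the pair hypothesis. The paper's iterated-bracket trick is a bit slicker and avoids the case split, while your argument is more structural and makes explicit why a ``diagonal'' obstruction cannot occur; both rely on simplicity of $\lsl(d,\R)$ in essentially the same way.
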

\begin{proof}
Let $\ik$ be the Lie subalgebra of $\lsl(d,\R)^N$ corresponding to $K$.
The assumption implies that the projection of $\ik$ onto the $i$th and $j$th factor equals $\lsl(d,\R)\times\lsl(d,\R)$, 
for any $i<j$.
Since $\lsl(d,\R)$ is simple, there is a sequence $X_1,\ldots,X_{N-1}\in\lsl(d,\R)$ such that
$X:=[\ldots[[X_1,X_2],X_3],\ldots,X_{N-1}]\neq0$.
Let $p_j:\lsl(d,\R)^N\to\lsl(d,\R)$ denote projection onto the $j$th factor.
For each $j=1,\ldots,N-1$ we may now choose $Y_j\in\ik$ such that $p_1(Y_j)=X_j$ and $p_{j+1}(Y_j)=0$.
Then $[\ldots[[Y_1,Y_2],Y_3],\ldots,Y_{N-1}]=(X,0,\ldots,0)$.
It follows that if we let $\il$ be the set of those $Z\in\lsl(d,\R)$ for which $(Z,0,\ldots,0)\in\ik$,
then $\il\neq\{0\}$. Note also that our assumption (applied e.g.\ with $i=1$, $j=2$) implies that $\il$ is an ideal
in $\lsl(d,\R)$. Hence $\il=\lsl(d,\R)$, %
viz.\ $\lsl(d,\R)\times\{0\}\times\ldots\times\{0\}\subset\ik$.
Similarly $\{0\}\times\ldots\times\lsl(d,\R)\times\ldots\times\{0\}\subset\ik$,
with $\lsl(d,\R)$ in arbitrary position, and thus $\ik=\lsl(d,\R)^N$.
\end{proof}

The equidistribution of horospherical averages in Theorem \ref{equiThm2} implies, by the same argument as in \cite{partI}, the following equidistribution of spherical averages. The rotation $K(\vecv)\in\SO(d)$ is defined as in Section \ref{sec:Free}.

\begin{thm}\label{equiThm3}
Let $\Gamma_1,\ldots,\Gamma_N$ be pairwise incommensurable lattices in $\SLR$. Let $\lambda$ be a Borel probability measure on $\S_1^{d-1}$, absolutely continuous with respect to Lebesgue measure, and let $f:\S_1^{d-1}\times \widehat\Gamma\bs \SLR^N\to\RR$ be bounded continuous. Then
\begin{equation}\label{limiteq3}
	\lim_{t\to\infty} \int_{\S_1^{d-1}} f\big(\vecv,\varphi(K(\vecv)\Phi^{t})\big)\, d\lambda(\vecv)  = \int_{\S_1^{d-1}\times\widehat\Gamma\bs \SLR^N} f(\vecv,g) \, d\lambda(\vecv) \, d\mu_{\widehat\Gamma}(g) .
\end{equation}
\end{thm}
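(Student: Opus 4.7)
My strategy is the standard reduction used in \cite{partI} to derive spherical from horospherical equidistribution: on a bounded chart of the sphere, one rewrites $K(\vecv)\Phi^t$ as $B(\vecx)\,n_-(-\vecx)\Phi^t$ with $B(\vecx)\in\SLR$ bounded, invokes Theorem~\ref{equiThm2}, and absorbs the bounded factor into the test function via the right-$\SLR^N$-invariance of $\mu_{\widehat\Gamma}$. The only nontrivial input is an algebraic identity coming from the Iwasawa decomposition of $n_-(\vecx)$; everything else is routine bookkeeping.

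\emph{Reduction to a bounded chart.} By a smooth partition of unity on $\S_1^{d-1}$ subordinate to a finite open cover avoiding the unique singular point of $K$, it suffices to prove \eqref{limiteq3} for $f$ replaced by $\chi(\vecv)f(\vecv,\cdot)$, where $\chi\in C^\infty(\S_1^{d-1})$ is supported in a single chart. Up to a right translation by a fixed element $R\in\SO(d)$, which is invisible to the $\mu_{\widehat\Gamma}$-integral by its right-$\SLR^N$-invariance, any such chart is identified with the image of a bounded open set $U\subset\RR^{d-1}$ under the parametrisation $\vecx\mapsto\vecv(\vecx)$ constructed next.

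\emph{The Iwasawa factorisation.} Write $n_-(\vecx)=k(\vecx)\,h(\vecx)$ using the Iwasawa decomposition $\SLR=\SO(d)\cdot H$, where $H\subset\SLR$ is the subgroup of lower triangular matrices with positive diagonal entries; both $k(\vecx)\in\SO(d)$ and $h(\vecx)\in H$ depend smoothly on $\vecx$. Applying this to $\vece_1$ from the left gives $(1,\vecx)=\vece_1 n_-(\vecx)=(\vece_1 k(\vecx))h(\vecx)$, so $\vecv(\vecx):=\vece_1 k(\vecx)=(1,\vecx)h(\vecx)^{-1}\in\S_1^{d-1}$ is a smooth function of $\vecx$. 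At $\vecx=\vecnull$ the differential of $\vecv(\cdot)$ is easily seen to be injective, so $\vecv$ is a local diffeomorphism onto a neighbourhood of $\vece_1$; we take $U$ inside its domain, so that $\chi(\vecv(\cdot))$ is compactly supported in $U$. By construction $\vecv(\vecx)\,k(\vecx)^{-1}=\vece_1$, so $k(\vecx)^{-1}$ is an admissible choice of $K(\vecv(\vecx))$; hence the paper's smooth $K$ differs from it by a smooth function $k_0(\vecx)\in K_0:=\{R\in\SO(d)\col\vece_1R=\vece_1\}$. Inverting $n_-(\vecx)=k(\vecx)h(\vecx)$ gives $k(\vecx)^{-1}=h(\vecx)\,n_-(-\vecx)$, and we arrive at the key identity
\begin{equation*}
K(\vecv(\vecx))\,\Phi^t\;=\;B(\vecx)\,n_-(-\vecx)\,\Phi^t,\qquad B(\vecx):=k_0(\vecx)\,h(\vecx)\in\SLR,
\end{equation*}
with $B$ smooth, $t$-independent, and bounded on the compact closure $\overline U$.

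\emph{Passage to Theorem~\ref{equiThm2}.} Writing $d\lambda=\rho\,d\vol_{\S_1^{d-1}}$ and letting $J(\vecx)$ be the Jacobian of $\vecv(\cdot)$, the function
\begin{equation*}
\tilde f(\vecx,g):=\chi(\vecv(\vecx))\,\rho(\vecv(\vecx))\,J(\vecx)\,f\bigl(\vecv(\vecx),\varphi(B(\vecx))\,g\bigr),
\end{equation*}
extended by zero outside $\overline U$, is bounded and continuous on $\RR^{d-1}\times\widehat\Gamma\bs\SLR^N$, and the identity above gives
\begin{equation*}
\int_{\S_1^{d-1}}\chi(\vecv)\,f\bigl(\vecv,\varphi(K(\vecv)\Phi^t)\bigr)\,d\lambda(\vecv)\;=\;\int_{\RR^{d-1}}\tilde f\bigl(\vecx,\varphi(n_-(-\vecx)\Phi^t)\bigr)\,d\vecx.
\end{equation*}
Theorem~\ref{equiThm2}, applied after the substitution $\vecx\mapsto-\vecx$ (which preserves Lebesgue measure), shows that this converges as $t\to\infty$ to $\int_{\RR^{d-1}}\!\int_{\widehat\Gamma\bs\SLR^N}\tilde f(\vecx,g)\,d\mu_{\widehat\Gamma}(g)\,d\vecx$; right-$\SLR^N$-invariance of $\mu_{\widehat\Gamma}$ then kills the $\varphi(B(\vecx))$-factor via the substitution $g\mapsto\varphi(B(\vecx))^{-1}g$. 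Transforming back to spherical coordinates and summing over the partition of unity yields \eqref{limiteq3}.
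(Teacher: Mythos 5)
Your overall strategy — localise to a chart, parametrise $\vecv$ by $\vecx\in\RR^{d-1}$, rewrite $K(\vecv(\vecx))\Phi^t$ in terms of $n_-(\cdot)\Phi^t$ plus a bounded correction, and feed the result into Theorem~\ref{equiThm2} — is indeed the argument the paper appeals to (by reference to \cite[Thm.~5.3]{partI}). But the key algebraic identity you derive is on the wrong side, and this is not a cosmetic slip: it breaks the absorption step.

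Two things go wrong. First, the stabiliser ambiguity in $K$ acts on the \emph{right}, not the left: since $\vecv(\vecx)K(\vecv(\vecx))=\vece_1=\vecv(\vecx)k(\vecx)^{-1}$, one gets $\vece_1 K(\vecv(\vecx))^{-1}k(\vecx)^{-1}=\vece_1$, i.e.\ $K(\vecv(\vecx))=k(\vecx)^{-1}k_0(\vecx)$ with $k_0(\vecx)\in K_0$, not $k_0(\vecx)k(\vecx)^{-1}$. Using $k(\vecx)^{-1}=h(\vecx)n_-(-\vecx)$ and the fact that $k_0$ commutes with $\Phi^t$, the corrected identity is
\begin{equation*}
K(\vecv(\vecx))\Phi^t=h(\vecx)\,n_-(-\vecx)\,\Phi^t\,k_0(\vecx).
\end{equation*}
The factor $k_0(\vecx)$ on the right is harmless, but $h(\vecx)$ now sits on the \emph{left} of $n_-(-\vecx)\Phi^t$. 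Second, and more fundamentally: the homogeneous space $\widehat\Gamma\backslash\SLR^N$ only carries a \emph{right} $\SLR^N$-action, so your test function $\tilde f(\vecx,g)=\cdots f\bigl(\vecv(\vecx),\varphi(B(\vecx))g\bigr)$ is not well defined (replace the coset representative $g'$ by $\gamma g'$ and the value changes unless $\varphi(B(\vecx))$ normalises $\widehat\Gamma$), and the ``substitution $g\mapsto\varphi(B(\vecx))^{-1}g$'' is a left translation, which right-invariance of $\mu_{\widehat\Gamma}$ does not license. The Iwasawa factorisation $n_-(\vecx)=k(\vecx)h(\vecx)$ simply produces the bounded factor on the unusable side. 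The route that works is the opposite (big Bruhat cell) decomposition: write $K(\vecv)=n_-(\vecy(\vecv))\,p(\vecv)$ with $p(\vecv)$ lower block-triangular (explicitly $\vecy=\veck_{12}K_{22}^{-1}$ in block notation), so that
\begin{equation*}
K(\vecv)\Phi^t=n_-(\vecy(\vecv))\,\Phi^t\cdot\bigl(\Phi^{-t}p(\vecv)\Phi^t\bigr),
\end{equation*}
where the correction now sits on the right, is bounded, and in fact converges (uniformly on the chart) to the Levi part of $p(\vecv)$ as $t\to\infty$, since conjugation by $\Phi^{-t}$ contracts the lower unipotent block by $\e^{-dt}$. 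That right factor \emph{can} be absorbed into the test function via $g\mapsto g\,\varphi(\cdot)$, with a short uniform-continuity argument to handle the $t$-dependence. If you rework the passage to Theorem~\ref{equiThm2} with this decomposition (and the correspondingly adjusted parametrisation $\vecy$ of the chart), the rest of your proof — partition of unity, Jacobians, density $\rho$, the final change of variable — goes through as written.
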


For our application to the Lorentz gas we are interested in the choice of lattices $\Gamma_i=M_i^{-1} \SLZ M_i$. The following lemma is a restatement of the fact that $\scrS$ is the commensurator of $\SLZ$ in $\SLR$.

\begin{lem}\label{littlelem}
The lattices $\Gamma_i=M_i^{-1} \SLZ M_i$ and $\Gamma_j=M_j^{-1} \SLZ M_j$ are commensurable if and only if $M_i M_j^{-1}\in\scrS$.
\end{lem}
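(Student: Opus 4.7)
The proof is essentially an unpacking of the definition of the commensurator, combined with the observation that commensurability of subgroups is preserved by simultaneous conjugation.

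First I would recall that commensurability is an equivalence relation on subgroups of $\SL(d,\R)$ which is preserved under simultaneous conjugation by any element of $\SL(d,\R)$: if $g\in\SL(d,\R)$ then $\Gamma\cap\Gamma'$ is a lattice in $\SL(d,\R)$ if and only if $g\Gamma g^{-1}\cap g\Gamma'g^{-1}=g(\Gamma\cap\Gamma')g^{-1}$ is.

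Next I would conjugate $\Gamma_i$ and $\Gamma_j$ simultaneously by $M_i$. One finds
\begin{equation*}
M_i\Gamma_iM_i^{-1}=\SL(d,\Z),\qquad
M_i\Gamma_jM_i^{-1}=(M_jM_i^{-1})^{-1}\SL(d,\Z)(M_jM_i^{-1}).
\end{equation*}
Thus $\Gamma_i$ and $\Gamma_j$ are commensurable if and only if $\SL(d,\Z)$ and $g^{-1}\SL(d,\Z)g$ are commensurable, where $g:=M_jM_i^{-1}$. By the very definition of $\scrS$ as the commensurator of $\SL(d,\Z)$ in $\SL(d,\R)$, the latter condition is equivalent to $g\in\scrS$.

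Finally, the commensurator $\scrS$ is a subgroup of $\SL(d,\R)$ (as commensurators always are), and in particular closed under inversion, so $g=M_jM_i^{-1}\in\scrS$ if and only if $g^{-1}=M_iM_j^{-1}\in\scrS$. Combining these equivalences yields the claim. There is no real obstacle here; the only subtlety to mention, if one wishes to be fully self-contained, is the verification that $\scrS$ is indeed a group (which follows from the fact that commensurability is an equivalence relation, since if $\Gamma\cap g^{-1}\Gamma g$ and $\Gamma\cap h^{-1}\Gamma h$ both have finite index in $\Gamma$, then so does $\Gamma\cap(gh)^{-1}\Gamma(gh)$ by a standard double-intersection argument).
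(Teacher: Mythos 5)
Your proof is correct and is exactly the argument the paper has in mind: the paper does not write out a proof but simply remarks that the lemma "is a restatement of the fact that $\scrS$ is the commensurator of $\SL(d,\Z)$ in $\SL(d,\R)$," and your conjugation-by-$M_i$ argument together with the observation that $\scrS$ is a group is the standard way to unpack that statement.
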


In view of Lemma \ref{littlelem}, Theorem \ref{equiThm3} implies the following. We set $X_1:=\SLZ\backslash\SLR$ and denote by $\mu_1$ the unique $\SLR$ invariant probability measure on $X_1$.
Recall that we say that
{\em $M_1,\ldots,M_N\in\SLR$ are pairwise incommensurable} if $M_i M_j^{-1}\notin\scrS$ for all $i\neq j$.

\begin{thm}\label{SLDZTHM}
Assume $M_1,\ldots,M_N\in\SLR$ are pairwise incommensurable. Let $\lambda$ be a Borel probability measure on $\S_1^{d-1}$ which is absolutely continuous with respect to Lebesgue measure, and let $f:X_1^N\to\RR$ be bounded continuous. Then
\begin{multline}
	\lim_{t\to\infty} \int_{\S_1^{d-1}} f\bigl(M_1K(\vecv) \Phi^t,\ldots,M_N K(\vecv) \Phi^t\bigr)\, d\lambda(\vecv) \\
	= \int_{X_1^N} f(g_1,\ldots,g_N) \, d\mu_1(g_1) \cdots d\mu_1(g_N) .
\end{multline}
\end{thm}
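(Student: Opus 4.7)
The plan is to reduce Theorem \ref{SLDZTHM} directly to Theorem \ref{equiThm3} by conjugation. First, by Lemma \ref{littlelem}, the hypothesis that $M_1,\ldots,M_N$ are pairwise incommensurable translates to the statement that the lattices $\Gamma_i:=M_i^{-1}\SLZ M_i$, $i=1,\ldots,N$, are pairwise incommensurable in $\SLR$, which is precisely the hypothesis of Theorem \ref{equiThm3}.

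Next, I would set up a measure-preserving identification $T_i:(X_1,\mu_1)\to(\Gamma_i\bs\SLR,\mu_{\Gamma_i})$ by $T_i(\SLZ g)=\Gamma_i M_i^{-1}g$. This is well-defined (since $M_i^{-1}\SLZ M_i=\Gamma_i$), a homeomorphism, and measure-preserving because its lift to $\SLR$ is the left translation $g\mapsto M_i^{-1}g$, which preserves Haar measure. Under $T_i$ one has the key identity $T_i(\SLZ M_i h)=\Gamma_i h$; in particular
\begin{equation*}
T_i\bigl(\SLZ M_i K(\vecv)\Phi^t\bigr)=\Gamma_i K(\vecv)\Phi^t, \qquad i=1,\ldots,N.
\end{equation*}
Hence, defining $\tilde f:\widehat\Gamma\bs\SLR^N\to\RR$ by
\begin{equation*}
\tilde f\bigl(\widehat\Gamma(h_1,\ldots,h_N)\bigr):=f\bigl(\SLZ M_1 h_1,\ldots,\SLZ M_N h_N\bigr),
\end{equation*}
one checks that $\tilde f$ is well-defined, bounded and continuous, and that
\begin{equation*}
f\bigl(M_1 K(\vecv)\Phi^t,\ldots,M_N K(\vecv)\Phi^t\bigr)=\tilde f\bigl(\varphi(K(\vecv)\Phi^t)\bigr).
\end{equation*}

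Applying Theorem \ref{equiThm3} to the $\vecv$-independent test function $\tilde f$ (permitted by the pairwise incommensurability of the $\Gamma_i$) yields
\begin{equation*}
\lim_{t\to\infty}\int_{\S_1^{d-1}}\tilde f\bigl(\varphi(K(\vecv)\Phi^t)\bigr)\,d\lambda(\vecv)=\int_{\widehat\Gamma\bs\SLR^N}\tilde f(h)\,d\mu_{\widehat\Gamma}(h).
\end{equation*}
The right-hand side factorizes as $\int\tilde f\,d\mu_{\Gamma_1}\cdots d\mu_{\Gamma_N}$, and substituting $g_i=M_i h_i$ in the $i$-th factor (again invoking left-invariance of Haar measure on $\SLR$) rewrites the integral as $\int_{X_1^N}f(\SLZ g_1,\ldots,\SLZ g_N)\,d\mu_1(g_1)\cdots d\mu_1(g_N)$, which is precisely the desired right-hand side.

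I do not anticipate any significant obstacle: all the substantive work is contained in Theorem \ref{equiThm3} (powered by Ratner's theorem, Shah's theorem, and Lemma \ref{TWOTOMANYLEM}) together with Lemma \ref{littlelem}. What remains is purely the bookkeeping of the conjugation isomorphism $X_1\simeq\Gamma_i\bs\SLR$ and its compatibility with Haar measure, all of which reduces to the bi-invariance of Haar on $\SLR$.
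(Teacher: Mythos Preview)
Your proposal is correct and follows exactly the paper's approach: the paper simply states ``In view of Lemma \ref{littlelem}, Theorem \ref{equiThm3} implies the following'' before the theorem, and your argument spells out the conjugation $X_1\cong\Gamma_i\backslash\SLR$ that makes this implication precise.
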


As in \cite{partI}, the above equidistribution theorems can be extended to the semi-direct product group $\ASLR=\SLR\ltimes \RR^d$ with multiplication law
\begin{equation} \label{ASLMULTLAW}
	(M,\vecxi)(M',\vecxi')=(MM',\vecxi M' +\vecxi') .
\end{equation}
An action of $\ASLR$ on $\RR^d$ can be defined as
\begin{equation}
	\vecy \mapsto \vecy(M,\vecxi):=\vecy M+\vecxi .
\end{equation}
Each \textit{affine} lattice (i.e.\ translate of a lattice) 
of covolume one in $\RR^d$ can then be expressed as
$\ZZ^d g$ for some $g\in\ASLR$, 
and the space of affine lattices is then represented by $X=\ASLASL$ where $\ASLZ=\SLZ\ltimes \ZZ^d$.
We denote by $\mu$ the unique $\ASLR$ invariant probability measure on $X$. 

If $\vecalf\in\QQ^d$, say $\vecalf=\vecp/q$ for $\vecp\in\ZZ^d$, 
$q\in\Z_{>0}$, we see that
\begin{equation}
	\bigg(\ZZ^d + \frac{\vecp}{q}\bigg) \gamma M = \bigg(\ZZ^d + \frac{\vecp}{q}\bigg) M 
\end{equation}
for all
\begin{equation}\label{Gamq}
	\gamma \in \Gamma(q):= \{ \gamma\in\SLZ \col \gamma \equiv 1_d \bmod q \},
\end{equation}
the principal congruence subgroup. This means that the space of affine lattices with $\vecalf=\vecp/q$ can be parametrized by the homogeneous space $\Gamma(q)\backslash\SLR$ (this is not necessarily one-to-one). 

Given arbitrary $\vecalf_1,\ldots,\vecalf_N$ in $\R^d$, we introduce for each $i=1,\ldots,N$ a space $(X_i,\mu_i)$ 
and a map $\psi_i:\SL(d,\R)\to X_i$ as follows:
If $\vecalf_i\notin\Q^d$ then set $(X_i,\mu_i)=(X,\mu)$ and let $\psi_i$ be the map
$M\mapsto \ASL(d,\Z)(1_d,\vecalf_i)(M,\bn)$.
If $\vecalf_i\in\Q^d$ then fix some $q_i\in\Z^+$ so that $\vecalf_i\in q_i^{-1}\Z^d$,
set $X_i=\Gamma(q_i)\backslash\SLR$ with $\mu_i$ the unique $\SL(d,\R)$ invariant probability measure on $X_i$,
and let $\psi_i$ be the map $M\mapsto\Gamma(q_i)M$.
We now have the following generalization of Theorem \ref{SLDZTHM}.

\begin{thm}\label{equiThm9}
Let $\vecalf_1,\ldots,\vecalf_N\in\R^d$ be given and let $(X_i,\mu_i)$ and $\psi_i$ be as defined above.
Then for any $M_1,\ldots,M_N\in\SLR$ which are pairwise incommensurable,
any Borel probability measure $\lambda$ on $\S_1^{d-1}$ which is absolutely continuous with respect to 
Lebesgue measure, and any bounded continuous $f:X_{1}\times\cdots\times X_{N}\to\RR$,
\begin{multline}
	\lim_{t\to\infty} \int_{\S_1^{d-1}} f\bigl(\psi_1(M_1K(\vecv)\Phi^t),\ldots,\psi_N(M_N K(\vecv)\Phi^t)\bigr)\, d\lambda(\vecv) \\
	= \int_{X_1\times\cdots\times X_N} f(g_1,\ldots,g_N) \, d\mu_{1}(g_1) \cdots d\mu_{N}(g_N) .
\end{multline}
\end{thm}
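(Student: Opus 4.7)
The plan is to re-run the Ratner--Shah plus Lie-algebra argument that established Theorems~\ref{equiThm2} and~\ref{SLDZTHM}, now in an enlarged ambient group that accommodates the affine shifts. For each $i$, set $G_i=\ASLR$ and $\widetilde\Gamma_i=\ASLZ$ if $\vecalf_i\notin\Q^d$, and $G_i=\SLR$ and $\widetilde\Gamma_i=\Gamma(q_i)$ otherwise, so that $X_i=\widetilde\Gamma_i\backslash G_i$. Put $\widehat G=G_1\times\cdots\times G_N$, $\widehat\Gamma=\widetilde\Gamma_1\times\cdots\times\widetilde\Gamma_N$ and $\widehat X=\widehat\Gamma\backslash\widehat G$, with invariant probability measure $\mu_{\widehat\Gamma}=\mu_1\otimes\cdots\otimes\mu_N$. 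Define a mixed diagonal embedding $\varphi\colon\SLR\to\widehat G$ by $\varphi(M)=(\iota_1(M),\ldots,\iota_N(M))$, where $\iota_i(M)=(M,\bn)$ if $G_i=\ASLR$ and $\iota_i(M)=M$ otherwise. A direct computation with the multiplication law~\eqref{ASLMULTLAW} shows that the tuple $(\psi_1(M_1M),\ldots,\psi_N(M_NM))$ is the image of $\widehat\Gamma\widehat g\,\varphi(M)$ under the natural projection $\widehat X\to X_1\times\cdots\times X_N$, where $\widehat g\in\widehat G$ has $i$-th coordinate $(M_i,\vecalf_iM_i)$ in the affine case and $M_i$ in the linear case.

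Next, the sphere-unfolding argument used in the proofs of Theorems~\ref{equiThm2} and~\ref{equiThm3} (cf.\ also \cite[Thm.~5.3]{partI}) reduces the statement to the horospherical equidistribution
\[\lim_{t\to\infty}\int_{\R^{d-1}} f\bigl(\widehat\Gamma\widehat g\,\varphi(n_-(\vecx)\Phi^t)\bigr)\,d\lambda(\vecx)=\int_{\widehat X} f\,d\mu_{\widehat\Gamma}\]
for any absolutely continuous probability measure $\lambda$ on $\R^{d-1}$ and any bounded continuous $f\colon\widehat X\to\R$. Shah's theorem~\cite{Shah96} then implies that every weak-$*$ subsequential limit is integration against the unique $K$-invariant probability measure on a closed homogeneous set $\widehat\Gamma\backslash\widehat\Gamma\widehat gK\subseteq\widehat X$, for some closed connected subgroup $K\leq\widehat G$ containing $\varphi(\SLR)$.

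To conclude, one must show $K=\widehat G$, which is the analogue of the pairwise-to-full principle (Lemmas~\ref{littlelem} and~\ref{TWOTOMANYLEM}) used to pass from Theorem~\ref{equiThm3} to Theorem~\ref{SLDZTHM}. For every pair $i<j$, we first prove that the projection of $K$ onto $G_i\times G_j$ equals the full product. When both factors are linear this is \cite[Thm.~2]{farey}, applied via Lemma~\ref{littlelem} to the pairwise incommensurable $M_i,M_j$; for pairs involving an affine factor the same Ratner classification argument of \cite{farey} applies, now using the hypothesis $M_iM_j^{-1}\notin\scrS$ together with the irrationality of the relevant $\vecalf$'s to rule out the additional proper $\varphi(\SLR)$-invariant subgroups that arise from the translation ideals $\R^d\subset\asl(d,\R)$. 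Given these pairwise projections, the Lie-algebra manipulation of Lemma~\ref{TWOTOMANYLEM}, applied to the semisimple parts of the factors, places $\lsl(d,\R)$ in each single-slot position inside $\ik:=\mathrm{Lie}(K)$. The translation ideal $\R^d$ in each affine slot is then recovered by bracketing that slot's $\lsl(d,\R)\subset\ik$ against any element of $\ik$ with nonzero translation component there (such an element is provided by the pairwise fullness), using that $\SLR$ acts irreducibly on $\R^d$ for $d\geq 2$. Hence $\ik=\mathrm{Lie}(\widehat G)$ and $K=\widehat G$.

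The main obstacle is the pairwise step in the mixed and doubly-affine cases: one must rule out ``twisted'' closed subgroups of $G_i\times G_j$ that combine the diagonal semisimple embedding with a nontrivial $\SLR$-cocycle into the translation factor, while also interacting correctly with the congruence subgroups $\Gamma(q_i)$ when some $\vecalf_i$ is rational. This amounts to extending the two-factor analysis of \cite{farey} to track the affine coordinates and show that any such twist forces either $M_iM_j^{-1}\in\scrS$ or a rational compatibility between $\vecalf_i$ and $\vecalf_j$, both of which are excluded by the hypotheses.
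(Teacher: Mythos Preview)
Your setup (the ambient group $\widehat G$, the lattice $\widehat\Gamma$, the diagonal embedding $\varphi$), the reduction from spherical to horospherical averages, and the invocation of Ratner--Shah to reduce everything to proving $K=\widehat G$ all match the paper. The divergence is in how you propose to establish $K=\widehat G$.

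You aim for pairwise fullness of the projection of $K$ onto every $G_i\times G_j$, including the mixed $\ASLR\times\SLR$ and doubly-affine $\ASLR\times\ASLR$ cases, and then a Lemma~\ref{TWOTOMANYLEM}-style recombination. As you acknowledge, the pairwise step in the affine cases is not covered by \cite{farey} and is left as the ``main obstacle''. The paper avoids this obstacle entirely by decoupling the linear and affine data rather than treating them pairwise. It uses two simpler projections: (i) the linearisation $\widehat\rho:\widehat G\to\SLR^N$ obtained by dropping all translation components, and (ii) the single-factor maps $p_i:\widehat G\to G_i$. For (i), the image lattice $\widehat\rho(\widehat\Gamma)$ is a product of conjugates of $\SLZ$ (or $\Gamma(q_i)$) that are pairwise incommensurable by Lemma~\ref{littlelem}, so the proof of Theorem~\ref{equiThm2} gives $\widehat\rho(K)=\SLR^N$ directly; the point is that $\widehat\rho$ has compact fibers, so closedness of $\widehat\Gamma K$ transfers. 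For (ii) with $\vecalf_i\notin\Q^d$, the density of $\Gamma_i\SLR$ in $\ASLR$ (from \cite[proof of Thm.~5.2]{partI}, using only the irrationality of the single $\vecalf_i$) gives $p_i(K)=G_i$. A tailored Lie-algebra argument then combines $d\widehat\rho(\ik)=\lsl(d,\R)^N$ and $dp_i(\ik)=\ig_i$ to force each single-slot ideal $\il_i$ to equal $\ig_i$, hence $\ik=\widehat\ig$.

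So the paper never needs the two-factor affine classification you flag; the $N=2$ doubly-affine case of your scheme would in fact amount to re-proving the $N=2$ instance of the theorem itself. Note also that your final sentence is not quite right: the hypotheses impose no joint constraint on the $\vecalf_i$'s, so ``rational compatibility between $\vecalf_i$ and $\vecalf_j$'' is not excluded. What actually gets used is only the irrationality of each $\vecalf_i$ individually, and this enters precisely through the single-factor projections $p_i$, not through any pairwise analysis.
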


\begin{proof}
Let $\widehat G=G_1\times\cdots\times G_N$ and $\widehat \Gamma=\Gamma_1\times\cdots\times\Gamma_N$, where
\begin{align}
G_i=\begin{cases}\ASL(d,\R)&\text{if }\:\vecalf_i\notin\Q^d\\
\SL(d,\R)&\text{if }\:\vecalf_i\in\Q^d\end{cases}
\end{align}
and
\begin{align}
\Gamma_i=\begin{cases}
\bigl((1_d,\vecalf_i)M_i\bigr)^{-1}\ASL(d,\Z)\bigl((1_d,\vecalf_i)M_i\bigr)&\text{if }\:\vecalf_i\notin\Q^d
\\
M_i^{-1}\Gamma(q_i)M_i&\text{if }\:\vecalf_i\in\Q^d. %
\end{cases}
\end{align}
We here view $\SL(d,\R)$ as a subgroup of $\ASL(d,\R)$ through $M\mapsto(M,\bn)$.
Let $\varphi:\SL(d,\R)\to\widehat G$ be the diagonal imbedding.
Again by Ratner \cite{Ratner91a,Ratner91b}, there exists a closed connected subgroup $K\leq\widehat G$ such that
$\widehat\Gamma\cap K$ is a lattice in $K$,
$H:=\varphi(\SL(d,\R))\leq K$, and $\widehat\Gamma K$ equals the closure of $\widehat\Gamma H$ in $\widehat G$.
We are going to prove that $K=\widehat G$.
Once we have this, the proof of %
Theorem \ref{SLDZTHM} extends immediately to the present situation,
thus completing the proof of Theorem \ref{equiThm9}.

Let $\rho:\ASL(d,\R)\to\SL(d,\R)$ be the projection $(M,\vecxi)\mapsto M$,
and define $\widehat\rho:\widehat G\to\SL(d,\R)^N$
through $\widehat\rho(g_1,\ldots,g_N)=(g_1',\ldots,g_N')$ where
$g_i'=\rho(g_i)$ if $\vecalf_i\notin\Q^d$ and
$g_i'=g_i$ if $\vecalf_i\in\Q^d$.
Then 
\begin{align}
\widehat\rho(\widehat\Gamma)=\Gamma_1'\times\cdots\times\Gamma_N'
\quad\text{where }\:
\Gamma_i'\:\:\begin{cases}=M_i^{-1} \SL(d,\Z) M_i
&\text{if }\:\vecalf_i\notin\Q^d
\\
=\Gamma_i=M_i^{-1}\Gamma(q_i)M_i&\text{if }\:\vecalf_i\in\Q^d. %
\end{cases}
\end{align}
Note that $\Gamma_1',\ldots,\Gamma_N'$ are pairwise incommensurable,
by Lemma \ref{littlelem}.
Hence by the proof of Theorem \ref{equiThm2},
$\widehat\rho(\widehat\Gamma H)$ is dense in $\SL(d,\R)^N$.
However $\widehat\rho(\widehat\Gamma H)\subset\widehat\rho(\widehat\Gamma K)$,
and $\widehat\rho(\widehat\Gamma K)$  %
is closed in
$\SL(d,\R)^N$, %
since $\widehat\rho$ induces a projection map
$\widehat\Gamma\backslash\widehat G\to\widehat\rho(\widehat\Gamma)\backslash\SL(d,\R)^N$
with compact fibers.
Therefore $\widehat\rho(\widehat\Gamma K)=\SL(d,\R)^N$ and so
\begin{align}\label{equiThm9pf1}
\widehat\rho(K)=\SL(d,\R)^N.
\end{align}

Next, let $p_i:\widehat G\to G_i$ be the projection onto the $i$th factor.
If $\vecalf_i\in\Q^d$ then $p_i(K)=\SL(d,\R)$, since $H\leq K$ and $p_i(H)=\SL(d,\R)$.

Now assume $\vecalf_i\notin\Q^d$.
Then %
$p_i(\widehat\Gamma)=\Gamma_i=\bigl((1_d,\vecalf_i)M_i\bigr)^{-1}\ASL(d,\Z)\bigl((1_d,\vecalf_i)M_i\bigr)$,
and $\Gamma_i\SL(d,\R)$ is dense in $G_i=\ASL(d,\R)$ by \cite[proof of Thm.\ 5.2]{partI}.
Furthermore, since $\widehat\Gamma\cap K$ is a lattice in $K$,
$p_i(\widehat\Gamma\cap K)$ is a lattice in $p_i(K)$
(cf.\ \cite[Lemma 1.6]{Raghunathan});
hence also $\Gamma_i\cap p_i(K)$ is a lattice in $p_i(K)$ and $p_i(K)$ is a closed subgroup of $G_i$,
so that $\Gamma_i p_i(K)$ is closed in $G_i$.
But $\Gamma_i\SL(d,\R)=\Gamma_ip_i(H)\subset\Gamma_i p_i(K)$. Hence
\begin{align}\label{equiThm9pf2}
p_i(K)=G_i,\qquad i=1,\ldots,N.
\end{align}

We will show that \eqref{equiThm9pf1} and \eqref{equiThm9pf2} together imply $K=\widehat G$.
Let $\ig_i$ be the Lie algebra of $G_i$; then $\widehat\ig=\ig_1\oplus\cdots\oplus\ig_N$ is the Lie algebra of $\widehat G$.
Let $\ik$ be the Lie subalgebra of $\widehat\ig$ corresponding to $K$.
Let us first assume $\vecalf_1\notin\Q^d$;
then $\ig_1=\asl(d,\R)$, which we may identify in a natural way with the linear space $\lsl(d,\R)\oplus\R^d$
endowed with the Lie bracket $[(A,\vecv),(B,\vecw)]=([A,B],\vecv B-\vecw A)$
(cf., e.g., \cite[Prop.\ 1.124]{Knapp}).
Set 
\begin{align}
\il=\{X\in\asl(d,\R)\col (X,0,\ldots,0)\in\ik\}.
\end{align}
Using $dp_1(\ik)=\ig_1$ (cf.\ \eqref{equiThm9pf2}) it follows that $\il$ is an ideal in $\ig_1$;
hence also $d\rho(\il)$ is an ideal in $\lsl(d,\R)$
(note that $d\rho:\asl(d,\R)\to\lsl(d,\R)$ is the map $(A,\vecv)\mapsto A$).
It follows from \eqref{equiThm9pf1} that $d\widehat\rho(\ik)=\lsl(d,\R)^N$;
hence for any given $A,B\in\lsl(d,\R)$, there exist
$\vecv,\vecw\in\R^d$ and $X_i,Y_i\in\ig_i$ ($i=2,\ldots,N$) such that
$X_i=Y_i=0$ when %
$\vecalf_i\in\Q^d$,
$d\rho(X_i)=d\rho(Y_i)=0$ when %
$\vecalf_i\notin\Q^d$, and
\begin{align}
\bigl((A,\vecv),X_2,\ldots,X_N\bigr),\:
\bigl((B,\vecw),Y_2,\ldots,Y_N\bigr)\in\ik.
\end{align}
Hence also
\begin{align}
\Bigl[\bigl((A,\vecv),X_2,\ldots,X_N\bigr),\bigl((B,\vecw),Y_2,\ldots,Y_N\bigr)\Bigr]
=\bigl(([A,B],\vecv B-\vecw A),0,\ldots,0\bigr)\in\ik.
\end{align}
Therefore $d\rho(\il)$ contains $[A,B]$ for any $A,B\in\lsl(d,\R)$;
since $\lsl(d,\R)$ is simple this implies $d\rho(\il)=\lsl(d,\R)$.
Now fix some $A\in\lsl(d,\R)$ which is invertible as a $d\times d$ matrix.
Because of $d\rho(\il)=\lsl(d,\R)$ there is some $\vecv\in\R^d$ such that $((A,\vecv),0,\ldots,0)\in\ik$.
Using also $dp_1(\ik)=\asl(d,\R)$ we see that for any $\vecw\in\R^d$ there exist some
$X_j\in\ig_j$ for $j=2,\ldots,N$ such that
$((A,\vecw),X_2,\ldots,X_N)\in\ik$.
Hence $\ik$ also contains their Lie product, viz.
\begin{align}
\Bigl((0,(\vecv-\vecw)A),0,\ldots,0\Bigr)\in\ik.
\end{align}
Hence, since $A$ is invertible and $\vecw$ is arbitrary, $\il$ contains $(0,\vecx)$ for all $\vecx\in\R^d$.
Together with $d\rho(\il)=\lsl(d,\R)$ this implies $\il=\ig_1$, and therefore
\begin{align}\label{equiThm9pf3}
\ig_1\times\{0\}\times\cdots\times\{0\}\subset\ik.
\end{align}

We proved \eqref{equiThm9pf3} under the assumption that $\vecalf_1\notin\Q^d$;
however the same argument in a simplified form applies when $\vecalf_1\in\Q^d$, i.e.\
\eqref{equiThm9pf3} holds in that case as well.
By analogous reasoning we get $\{0\}\times\cdots\times\ig_i\times\cdots\times\{0\}\subset\ik$ for each $i$.
Hence $\ik=\widehat\ig$ and $K=\widehat G$, and the proof is complete.
\end{proof}

Finally, Theorems \ref{freeThm} and \ref{exactpos1} now follow from Theorem \ref{equiThm9} by the same steps as in \cite[Sections 6 and 9]{partI}.

\end{document}